\DeclareMathOperator*{\argmax}{arg\,max}
\newcommand*\bigcdot{\mathpalette\bigcdot@{1}}
\newcommand*\bigcdot@[2]{\mathbin{\vcenter{\hbox{\scalebox{#2}{$\m@th#1\bullet$}}}}}
\newcolumntype{P}[1]{>{\centering\arraybackslash}p{#1}}
\newcolumntype{M}[1]{>{\centering\arraybackslash}m{#1}}
\newtheorem{proposition}{Proposition}
\newtheorem{theorem}{Theorem}
\newcommand{\nosemic}{\renewcommand{\@endalgocfline}{\relax}}% Drop semi-colon ;
\newcommand{\dosemic}{\renewcommand{\@endalgocfline}{\algocf@endline}}% Reinstate semi-colon ;
\let\oldnl\nl% Store \nl in \oldnl
\newcommand{\nonl}{\renewcommand{\nl}{\let\nl\oldnl}}% Remove line number for one line
\mathchardef\Re="023C
\mathchardef\Im="023D
\begin{document}

\title{Space-Time Signal Design for Multilevel Polar Coding in Slow Fading Broadcast Channels}

\author{Hossein~Khoshnevis,~\IEEEmembership{Member,~IEEE,}
        Ian~Marsland,~\IEEEmembership{Member,~IEEE,}
        Hamid~Jafarkhani,~\IEEEmembership{Fellow,~IEEE,}
        and~Halim~Yanikomeroglu,~\IEEEmembership{Fellow,~IEEE}
\thanks{H. Khoshnevis was with the Department of Systems and Computer Engineering, Carleton University, Ottawa, ON, Canada, when this work was done; he is now with the Huawei Canada Research Centre, Ottawa, ON. I. Marsland and H. Yanikomeroglu are with the Department of Systems and Computer Engineering, Carleton University, Ottawa, ON, Canada. H. Jafarkhani is with the Center for Pervasive Communications and Computing, University of California, Irvine, CA, USA. (e-mail: \{khoshnevis, ianm, halim\}@sce.carleton.ca, hamidj@uci.edu). 

This work is supported in part by Huawei Canada Co., Ltd., and in part by the Natural Sciences and Engineering Research Council of Canada's (NSERC) Strategic Partnership Grants for Projects (SPG-P) program (H. Khoshnevis and H. Yanikomeroglu), and in part by the NSF Award CCF-1526780 (H. Jafarkhani). This work was presented in part in \textit{IEEE PIMRC} 2017.}}

%\markboth{IEEE Transactions on Communications}%
%{Submitted paper}
\maketitle

\begin{abstract}
Slow fading broadcast channels can model a wide range of applications in wireless networks. Due to delay requirements and the unavailability of the channel state information at the transmitter (CSIT), these channels for many applications are non-ergodic. The appropriate measure for designing signals in non-ergodic channels is the outage probability. In this paper, we  provide a method to optimize STBCs based on the outage probability at moderate SNRs. 

Multilevel polar coded-modulation is a new class of coded-modulation techniques that benefits from low complexity decoders and simple rate matching. In this paper, we derive the outage optimality condition for multistage decoding and propose a rule for determining component code rates. We also derive an upper bound on the outage probability of STBCs for designing the set-partitioning-based labelling. Finally, due to the optimality of the outage-minimized STBCs for long codes, we introduce a  novel method for the joint optimization of short-to-moderate length polar codes and STBCs.
\end{abstract}

\begin{IEEEkeywords}
Slow fading broadcast channel, space-time signal design, bit-to-symbol mapping design, polar codes, multilevel coding.
\end{IEEEkeywords}

\IEEEpeerreviewmaketitle

\section{Introduction}

\IEEEPARstart{I}{n} broadcast channels the  channel state information (CSI) varies from user to user and is generally not available at the transmitter. As such, spectrally efficient adaptive modulation and coding cannot be employed. Instead, the system should provide reliable communication to as many users as possible, at as high a rate as possible. To achieve reliable communication, the system can be designed to minimize the average frame error rate (FER) at an average signal-to-noise ratio (SNR) for given channel statistics (e.g., Rayleigh fading). Alternatively, the outage probability can be minimized since it is a lower bound on the FER of the system in non-ergodic slow fading broadcast channels \cite[references therein]{Jafarkhani2005}. 

Space-time block codes (STBCs) are a class of low complexity multiple-input multiple-output (MIMO) schemes that can achieve low outage probability without CSI at the transmitter. Thus, STBCs are a reasonable solution for communicating over slow fading broadcast channels. Orthogonal STBCs (OSTBCs), introduced by Alamouti in \cite{Alamouti1998} and by Tarokh, Jafarkhani and Calderbank in \cite{Tarokh1999-1}, benefit from low decoding complexity and can provide the full spatial diversity. However, they cannot achieve any coding gain and suffer from a rate loss as the number of antennas grows. To overcome these limitations, quasi-orthogonal STBCs \cite{Jafarkhani2001}, super-orthogonal space-time trellis codes \cite{Jafarkhani2003}, algebraic codes \cite{Belfiore2005,Sezginer2007},\cite[references therein]{Mietzner2009}, and space-time super-modulations \cite{Nikitopoulos2017} have been proposed.  

Tarokh et al., in \cite{Tarokh1998}, introduced the rank and determinant criteria as two useful measures for designing STBCs at high SNRs. Even though most STBCs are designed on the basis of these criteria, they do not guarantee good performance at low-to-moderate SNRs  \cite{Falou2012}. Instead, STBCs can be  optimized to achieve a specific outage probability at the lowest possible SNR.

 Polar codes, introduced by Erdal Ar{\i}kan, are a low complexity class of forward error correction (FEC) codes that work based on the concept of channel polarization as  a method to improve the reliability of some bit-channels at the expense of others \cite{Arikan2009}. Polar codes are uniquely designed for specific channel statistics and a given modulation scheme by determining the set of bit-channels that are used to carry the message (the information set).

For matching binary codes to the modulation and STBC, an efficient coded-modulation scheme should be used. It has been shown that polar coded-modulation, constructed on the basis of multilevel coding (MLC) with multistage decoding (MSD), outperforms polar coded-modulation constructed on the basis of bit-interleaved coded-modulation (BICM) \cite{Seidl2013-0}. This is due to clarity of design and the conceptual similarity of MLC with the set-partitioning-based bit-to-symbol mapping (SPM) to channel polarization, observed initially in \cite{Arikan2006}. Moreover, multilevel polar coded-modulation (MLPCM) outperforms BICM-based convolutional and turbo coded-modulation schemes as well \cite{khoshnevis2017-PIMRC1}. Indeed, MLPCM can provide a low complexity power-efficient scheme that can be employed in a wide range of wireless applications.

Typically, due to the fact that MLC/MSD achieves the channel capacity, the symbol-wise average mutual information is maximized to design the signal constellation for MLC/MSD \cite{Barsoum2007} assuming that capacity-achieving FEC codes are employed. Similarly, as we will show in this paper, the symbol-wise outage probability can be minimized to achieve reliable STBCs for non-ergodic channels.

To design the SPM, direct evaluation of the open-form measures on the performance of MSD, such as the sum of the binary channel cutoff rates \cite{Arikan2006}, is difficult. Instead, typically channel dependent pairwise measures are used to design set-partitioning \cite{Balogun2017}. One relevant pairwise measure for slow fading channels is the pairwise outage probability \cite{Vojcic1994}. Fortunately, by substituting the cutoff rate \cite{Massey1974} instead of the mutual information, a closed-form upper bound on the pairwise outage probability can be derived. 

The signal design based on the outage probability can improve the performance especially for long FEC codes \cite{Duyck2013precoding}. However, since it does not consider the structure of the FEC codes and decoders, it is not the best measure for designing signals used with short to moderate length codes. 

In this paper, we aim to enhance the performance of the concatenation of STBCs and multilevel polar codes for slow fading broadcast channels.  The main contributions of this paper are summarized as follows:

\begin{itemize}
\item{We propose a method to optimize STBCs, low complexity space block codes (SBCs) \cite{Hochwald2003,Mroueh2012}, and time-varying SBCs (TVSBCs) \cite{Duyck2012time1}, by minimizing the outage probability;}
\item{derive an outage rule to determine the component code rates of MLC/MSD in slow fading channels;}
\item{design the SPM for MLPCM based on a novel bound derived on the pairwise outage probability and a proposed algorithm to modify existing measures; and}
\item{propose a method for the joint optimization of polar codes and STBCs.}
\end{itemize}

The rest of the paper is organized as follows: in Section~\ref{sec:sysmodel}, the system model is defined; in Section~\ref{sec:classSTBC}, the STBC design methods and codes used in this paper are reviewed; in Section~\ref{sec:STBCoutage}, the STBC design by minimizing the outage probability is described;  in Section~\ref{sec:codedesign}, the design elements of the MLPCM, including the outage rule for determining the component code rates, the labelling algorithm, and the MLPCM design procedure are discussed; in Section~\ref{sec:JointOptimization}, the joint optimization of polar codes and STBCs is explained; in Section~\ref{sec:numresult}, the numerical results are presented; and in Section~\ref{sec:Conclusion}, the conclusions are provided.

\section{System Model}
\label{sec:sysmodel}

The system  consists of a transmitter and a receiver equipped with $N_{\rm{t}}$ transmit and $N_{\rm{r}}$ receive antennas. Each $K$ bits of data are coded using a multilevel binary polar code with a rate of $R_{\rm{tot}}=K/N_{\rm{tot}}$ and a length of $N_{\rm{tot}}=NB$ consisting of $B$ levels each with a  code length of $N$. Each level encoder of the multilevel code encodes a portion of the total $K$ bits corresponding to the component code rates $\{R_1,...,R_B\}$. After encoding all levels, each set of code bits \{$c_n^1,c_n^2,...,c_n^B$\}, for $n=1,...,N$, are mapped to a space-time symbol  by employing a multidimensional SPM. The space-time symbol is one of the signal points of a STBC, $\textbf{G}$, distributed on $L$ time slots and $N_{\rm{t}}$ transmit antennas. The space-time symbol is then sent through the $N_{\rm{t}} \times N_{\rm{r}}$  quasi-static MIMO Rayleigh flat fading channel $\textbf{H}$ with the distribution $\mathcal{CN}(\textbf{0},\textbf{I})$. The $L \times N_{\rm{r}}$ received samples are
\begin{equation}
\label{recievedsignal}
\textbf{Y}_n=\textbf{S}_n \textbf{H}+ \textbf{W}_n,
\end{equation}
where $\textbf{S}_n$ is the space-time signal and $\textbf{W}_n$ is the zero-mean complex additive white Gaussian noise (AWGN) with variance $N_0/2$ per dimension. Note that each MLPCM codeword only observes one independent realization of $\textbf{H}$. In the following we consider the transmission of the $n^{\rm{th}}$ space-time symbol, and drop the subscript $n$ to simplify the notation. The probability density function of the received  samples, $\textbf{Y}$, given perfect CSI at the receiver, for the space-time symbol $\textbf{S}$ with elements $s_{n_{\rm{t}}}^l$ being the complex constellation symbol transmitted in time slot $l$ from antenna $n_{\rm{t}}$, is given by 
 \begin{equation}
\label{prob}
\fontsize{9.5pt}{12pt}
\newcommand\abs[1]{\left|#1\right|^2}
p(\textbf{Y} \mid \textbf{S},\textbf{H})=\frac{1}{{(\pi N_0)}^{LN_{\rm{r}}}}\textrm{exp}\left(  \frac{-||\textbf{Y-SH}||_{\rm{F}}^2}{N_0}  \right),
 \end{equation}
where $||\cdot||_{\rm{F}}$ is the Frobenius norm.  Throughout this paper, the average SNR is adjusted using $N_0$ and we set ${\rm{E}}[||\textbf{S}||_{\rm{F}}^2]=1$. The STBC points are chosen using a multidimensional SPM that spans all bits of all STBC elements $s_{n_{\rm{t}}}^l$. This method of employing multidimensional SPM is elaborated on in \cite{Balogun2017} for the Golden code \cite{Belfiore2005} and the Grassmannian constellation. The received samples of codewords \{$\hat{\textbf{c}}^1,\hat{\textbf{c}}^2,...,\hat{\textbf{c}}^B$\} are decoded using an MSD, where $\hat{c}^b$ is the decoded codeword of the $b^{\rm{th}}$  level code.

In the MSD architecture, the code bits of each level are deduced with the aid of the received symbol and previously deduced code bits of the upper levels \cite{Imai1977}. At each level after decoding, the received message word is fed to an encoder and the generated codeword $\hat{\textbf{c}}^b$ reduces part of the ambiguity for the demapper of the next level, which enables it to compute more reliable log-likelihood-ratios (LLRs). Therefore, the LLR estimation at $b^{\rm{th}}$ level can be given by
$\lambda_{b}= \ln \frac{p(\textbf{Y}|\textbf{c}^{1:b-1}=\hat{\textbf{c}}^{1:b-1},c^b=0,\textbf{H})}{p(\textbf{Y}|\textbf{c}^{1:b-1}=\hat{\textbf{c}}^{1:b-1},c^b=1,\textbf{H})}$, where
\begin{equation}
\label{probabilityzeroone}
\begin{split}
 & p(\textbf{Y}|\textbf{c}^{1:b-1}=\hat{\textbf{c}}^{1:b-1},c^b=0,\textbf{H}) \\
 &= \frac{1}{2^{B-b+1}} \sum_{\textbf{c}^{b+1:B} \in [0,1]^{B-b+1}} p(\textbf{Y}|\textbf{S}=\text{SM}[\hat{\textbf{c}}^{1:b-1},0,\textbf{c}^{b+1:B}],\textbf{H}), \\
& p(\textbf{Y}|\textbf{c}^{1:b-1}=\hat{\textbf{c}}^{1:b-1},c^b=1,\textbf{H}) \\
 & =\frac{1}{2^{B-b+1}} \sum_{\textbf{c}^{b+1:B} \in [0,1]^{B-b+1}} p(\textbf{Y}|\textbf{S}=\text{SM}[\hat{\textbf{c}}^{1:b-1},1,\textbf{c}^{b+1:B}],\textbf{H}), \\
\end{split}
\end{equation}
and $\textbf{c}^{i:j}$ represents code bits levels $i$ to $j$ at a specific $n$ and $\text{SM}[\textbf{c}]$ defines the mapping from code bits $\textbf{c}^{1:B}$ to a space-time symbol according to the bit-to-symbol mapping rule and the STBC used. Note that there are various techniques to simplify the LLR calculation for STBCs (e.g. \cite{khoshnevis2019}).

\section{Review of STBC Design Methods}
\label{sec:classSTBC}

In this section, we review the structure and the design methods of STBCs, SBCs, and TVSBCs used in this paper.

\subsection{Space-time Block Codes} 

The pairwise error probability between two space-time symbols of a STBC, $\textbf{S}_i$ and $\textbf{S}_j$, can be bounded as \cite{Jafarkhani2005}
\begin{equation}
\label{PEP1}
{p(\textbf{S}_i  \rightarrow \textbf{S}_j|\textbf{H})} \leq \frac{1}{2} \text{exp}\Big(-\frac{||\Delta_{i,j}\textbf{H}||_{\rm{F}}^2}{4N_0}\Big),
\end{equation} 
where  $\Delta_{i,j}=\textbf{S}_i-\textbf{S}_j$  with elements $\delta_{l,n_{\rm{t}}}$. The rank and determinant criteria are extremely effective in minimizing the pairwise error probability of STBCs at high SNRs \cite{Tarokh1998}. The rank of the pairwise difference matrix, $\Delta_{i,j}$, as a measure of the diversity, and the determinant of $\Delta_{i,j}\Delta_{i,j}^{\mathcal{H}}$, where $\mathcal{H}$ is the Hermitian transpose, as a measure of the coding gain for STBCs have been widely used for the design of codes. One of the well-known STBCs designed with this method, capable of achieving full diversity and the highest coding gain for a $2 \times 2$ antenna configuration, is the Golden code introduced by Belfiore et al. in \cite{Belfiore2005}. The structure of the Golden code, herein referred to as Matrix A, is given by
\begin{equation}
\label{code1}
      \textbf{G}_{\rm{A}}  = \frac{1}{\sqrt{5}}
     \begin{bmatrix}
       \alpha(s_1+s_2\theta)  &   \alpha(s_3+s_4\theta)  \\
       \jmath\bar{\alpha}(s_3+s_4\bar{\theta})  &   \bar{\alpha}(s_1+s_2\bar{\theta})\\
     \end{bmatrix},
\end{equation} 
where $s_1$, $s_2$, $s_3$, and $s_4$  are complex constellation symbols of the code chosen from a constellation with cardinality $|2^{\hat{B}}|$, $\theta=(1+\sqrt{5})/2$, $\bar{\theta}=1-\theta$, $\alpha=1+\jmath(1-\theta)$, and $\bar{\alpha}=1+\jmath(1-\bar{\theta})$, and $\jmath=\sqrt{-1}$. Note that by substituting constellation points in $\textbf{G}_{\rm{A}}$, space-time symbols $\textbf{S}$ in (\ref{recievedsignal}) are generated.  Later, Sezginer and Sari in \cite{Sezginer2007} introduced a $2 \times 2$ STBC to reduce the decoding complexity of the Golden code. This code, herein referred to as Matrix B and can be written as 
\begin{equation}
\label{code2}
      \textbf{G}_{\rm{B}}  =
     \begin{bmatrix}
          \alpha_1s_1+\alpha_2s_3  &    \alpha_1s_2+ \alpha_2s_4  \\
         -\beta_1s^*_2-\beta_2s^*_4   &   \beta_1s^*_1+\beta_2s^*_3 \\
     \end{bmatrix},
\end{equation} 
where $\alpha_1=\beta_1=1/\sqrt{2}$, $\alpha_2=1/\sqrt{2}e^{\jmath\phi}$, $\beta_2=-\jmath\alpha_2$ and $*$ denotes the conjugate. By performing a numerical search, $\phi=114.29^{\circ}$ is found to maximize the minimum determinant for QPSK  and larger size QAM constellations \cite{Sezginer2007,Sezginer2009}. As shown in \cite{Sezginer2007}, $\textbf{G}_{\rm{B}} $ performs only slightly worse than $\textbf{G}_{\rm{A}} $.

 To approach the capacity,  the average mutual information can be used as an SNR-dependent measure for designing and analyzing STBCs \cite{Hassibi2002}. The mutual information for a given realization of $\textbf{H}$ can be written as 
\begin{equation}
\label{ch9:mutualinfo}
I(\textbf{Y};\textbf{S}|\textbf{H})=\sum_{i=1}^{2^B}\text{Pr}(\textbf{S}_i){\rm{E}}_{\textbf{Y}}\Big[  \log_2 \Big( \frac{p(\textbf{Y}|\textbf{S}_i,\textbf{H})}{\sum_{j=1}^{2^B}\text{Pr}(\textbf{S}_j) p(\textbf{Y}|\textbf{S}_j,\textbf{H})} \Big) \Big].
\end{equation}
For notational convenience, $I(\textbf{Y};\textbf{S}|\textbf{H})$ is denoted as $I$ hereinafter. The average mutual information can be estimated by taking the expectation over $\textbf{H}$ as $I(\textbf{Y};\textbf{S})= {\rm{E}_{\textbf{H}}}[I]$.

In \cite{Falou2012}, to achieve good performance at low SNRs, the Matrix B code is optimized by maximizing the mutual information at low SNRs. As a result, the parameters of the Matrix B code for low SNRs are different from the code designed based on rank and determinant criteria. For example, for a wide range of low-to-moderate SNRs and by employing QPSK, $\phi$ is found to be $135^{\circ}$ for $N_{\rm{r}}=2$.

\subsection{Space Block Codes and Time-Varying Space Block Codes}
\label{SBCcondition}

Most high-performance STBCs have unnecessarily high decoding complexity.  Instead, a strong binary FEC outer code used with a simple SBC can achieve most of the coding gain of a complex STBC that is achieved by temporal expansion of the signal. Therefore, to reduce the complexity of STBC detection, the code can be expanded only across the antennas (i.e. $L=1$) to increase the transmission rate compared to single input multiple output systems. This class of STBCs is herein referred to as SBCs. In this section, we review the structure and design method of SBCs. The simplest form of SBCs is the vector channel symbol introduced by Hochwald and ten Brink in \cite{Hochwald2003}. This code can be written as $\textbf{G}_{\rm{C}} =[s_1  \quad s_2 \quad ... \quad s_{N_{\rm{t}}}]$ where $s_{n_{\rm{t}}}$ is chosen from a quadrature amplitude modulation (QAM). Instead of using available constellations in the structure of SBCs, parameterized SBCs can be optimized according to channel statistics. A parameterized SBC introduced in \cite{vakilian2015high}, herein referred to as $\textbf{G}_{\rm{D}} $, can be written as $\textbf{G}_{\rm{D}} =[\alpha_1  s_1 + \beta_1 s_2  \quad  \alpha_2 s_1+\beta_2 s_2]$. Using $\textbf{G}_{\rm{D}} $, the received  vector can be given as $\textbf{y}=\textbf{sVH}+\textbf{w}$ where $\textbf{s}=[s_1 \quad s_2]$ and $\textbf{V}$ is given as
\begin{equation}
\label{code1}
      \textbf{V}  =
     \begin{bmatrix}
       \alpha_1 &  \alpha_2  \\
       \beta_1  &  \beta_2  \\
     \end{bmatrix}.
\end{equation} 
To achieve the maximum diversity for $\textbf{G}_{\rm{D}} $, $\textbf{sVH}$ must be of full rank. Thus, the determinant of $\textbf{H}$ and $\textbf{V}$ should be non-zero. We know that $|\textbf{H}|\neq 0$  since any matrix with random variable elements will be full rank with probability one \cite{Jafarkhani2005}.

In contrast, $\textbf{V}$ should be adjusted to have a non-zero determinant. This results in $\alpha_1\beta_2 \neq \alpha_2\beta_1$ which must be used when designing $\textbf{G}_{\rm{D}} $. In $\textbf{G}_{\rm{D}} $ only two symbols are employed. By increasing the number of symbols, the number of degrees of freedom for the optimization increases. To this end, we propose to use four different symbols in the structure of a SBC. The corresponding SBC structure can be given as $\textbf{G}_{\rm{E}} =[\alpha_1  s_1 + \beta_1 s_2  \quad  \alpha_2 s_3+\beta_2 s_4]$. The matrix $\textbf{V}$ for $\textbf{G}_{\rm{E}} $ can be written as
\begin{equation}
\label{code1}
      \textbf{V}  =
     \begin{bmatrix}
       \alpha_1 &  \beta_1 & 0 & 0 \\
        0 & 0 & \alpha_2  &  \beta_2  \\
     \end{bmatrix}^\mathcal{T},
\end{equation} 
where $\mathcal{T}$ is the transpose operator. In this case, as soon as all coefficients are non-zero, the matrix $\textbf{V}$ is of full rank. 

In the next sections, we optimize the SBCs using different measures. All these codes are designed on the basis of the following principles. First, to limit the emitted power of each antenna, we assume $|\alpha_1|^2+|\beta_1|^2 = |\alpha_2|^2+|\beta_2|^2$.  Next, to limit the search space, we set $\alpha_2=\alpha_1$ and $\measuredangle \alpha_1=0^{\circ}$. To satisfy  $\alpha_1\beta_2 \neq \alpha_2\beta_1$ for $\textbf{G}_{\rm{D}} $, we can set $\beta_2=\jmath\beta_1$. Without loss of generality, the same condition can be used for $\textbf{G}_{\rm{E}} $. Therefore, the phase of $\beta_1$, $\varphi=\measuredangle \beta_1$, and the ratio of magnitudes $|\beta_1|/|\alpha_1|$  should be optimized. To keep the total power constant, a power constraint is considered which can be given as $\frac{1}{N_{\rm{t}}L}\sum_{n_{\rm{t}}} \sum_{l} |s_{n_{\rm{t}}}^l|^2=1$.

 SBCs with a higher number of transmit antennas can substantially enhance the performance. In this paper, we also optimize the extension of $\textbf{G}_{\rm{E}} $ for $N_{\rm{t}}=3$ given as $\textbf{G}_{\rm{F}}=[\alpha_1  s_1 + \beta_1 s_2  \quad  \alpha_2 s_3+\beta_2 s_4 \quad  \alpha_3 s_5+\beta_3 s_6]$. For $\textbf{G}_{\rm{F}}$, to limit the search space, we set $\alpha_1=\alpha_2=\alpha_3$ equal to a constant and optimize the other parameters given the power condition $|\alpha_1|^2+|\beta_1|^2 = |\alpha_2|^2+|\beta_2|^2= |\alpha_3|^2+|\beta_3|^2$.

Introduced by Duyck et al. in \cite{Duyck2012time1}, TVSBCs can improve the performance of SBCs by providing a wide range of rotations for symbols of each antenna during a codeword transmission. This in turn results in limiting the effect of the worst-case fading rotation and increasing the diversity. For implementing the TVSBC, the corresponding SBC can be multiplied by a $N_{\rm{t}} {\times} N_{\rm{t}}$  diagonal matrix $\textbf{A}_n$ which is known at the receiver. The main diagonal elements of $\textbf{A}_n$ are time-varying (TV) random complex numbers with constant magnitudes given as $e^{\jmath \theta_i(n)}$. Therefore, the new TVSBC can be written as $\textbf{G}(n)=\textbf{G}\textbf{A}_n$. Since approximately universal codes are already designed to be robust against the worst-case rotation \cite{Tavildar2006}, unlike SBCs, they cannot be improved by applying time-varying schemes.

\section{STBC Design based on the Outage Probability}
\label{sec:STBCoutage}

For non-ergodic channels, the outage probability is an achievable bound on the FER of an FEC coded system \cite{Jafarkhani2005}. Thus, it is a useful measure for the design and analysis of STBCs used with outer channel coding. The outage probability is the probability of the instantaneous mutual information being less than a specific target rate $R_{\rm{tot}}B$. It can be written as 
\begin{equation}
\label{ch2:outage1}
P_{\rm{out}}(\textbf{S},R_{\rm{tot}}B)\coloneqq {\rm{Pr}}(I<R_{\rm{tot}}B)=\mathcal{F}(R_{\rm{tot}}B),
\end{equation}
where $\mathcal{F}(R_{\rm{tot}}B)$ is the cumulative distribution function (CDF) of $I$. For the sake of simplicity, the total outage probability of a STBC is denoted by $\epsilon$ hereinafter. Similarly, by substituting the mutual information for the $b^{\rm{th}}$ address-bit of a specific STBC in (\ref{ch2:outage1}), the level-wise outage probabilities, $\epsilon_b$, can be defined. The outage probability can be evaluated numerically by noting that
\begin{equation}
\label{ch2:outage}
\epsilon= \boldsymbol{\int}\mathbb{1}\big(I,R_{\rm{tot}}B\big) f_{\textbf{H}} d\textbf{H},
\end{equation}
where the distribution of $\textbf{H}$ is given as $f_{\textbf{H}}= \frac{1}{\pi^{N_{\rm{t}}N_{\rm{r}}}}\exp(-||\textbf{H}||_{\rm{F}}^2)$, and  $\mathbb{1}(u,v) $ is the unit step function defined as 
\begin{equation}
\label{ch2:outagecutoff6}
 \mathbb{1}(u,v) =
\begin{cases} 1  &  u<v, \\  0 & \text{otherwise}.\end{cases}
\end{equation}

As a parallel concept, the $\epsilon$-outage capacity for a given outage probability $\epsilon$ is defined as 
\begin{equation}
\label{ch2:outagecapacity}
C_{\epsilon}=\mathcal{F}^{-1}(\epsilon).
\end{equation}
Similarly, $C_{\epsilon_b,b}$ can be defined for different levels of a space-time signal. For a given $\epsilon_b$,  $C_{\epsilon_b,b}$ can be approximated numerically by estimating the outage probability for a large number of  target rate values between 0 and 1 and choosing the one corresponding to the target $\epsilon_b$.

 MLC/MSD used with STBCs optimized by minimizing the outage probability can achieve high performance since, as we show in the following theorem, the outage probability of MLC/MSD approaches the constellation-constrained outage probability. 

\begin{theorem}
\label{Theorem11}
The outage probability of MLC/MSD scheme achieves the constellation-constrained outage probability of the slow fading broadcast channel.
\end{theorem}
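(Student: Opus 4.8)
The plan is to show that multistage decoding is information-lossless, so that the rates supportable at the individual levels sum to the constellation-constrained mutual information $I$ for every channel realization $\textbf{H}$; the outage region of MLC/MSD then coincides with the one defining $\epsilon$. The engine of the argument is the chain rule of mutual information applied to the bijective set-partitioning labelling $\textbf{S}=\text{SM}[c^1,\ldots,c^B]$.

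First I would decompose the symbol-wise mutual information of (\ref{ch9:mutualinfo}) as
\[
I(\textbf{Y};\textbf{S}\mid\textbf{H}) = \sum_{b=1}^{B} I\big(\textbf{Y};c^b \mid \textbf{c}^{1:b-1},\textbf{H}\big) \eqqcolon \sum_{b=1}^{B} I_b ,
\]
which is exact because the labelling gives $I(\textbf{Y};\textbf{S}\mid\textbf{H})=I(\textbf{Y};c^1,\ldots,c^B\mid\textbf{H})$. The quantity $I_b$ is precisely the mutual information of the equivalent channel seen by the $b^{\rm{th}}$-level decoder: by (\ref{probabilityzeroone}) that decoder is fed $\textbf{Y}$ together with the correctly decoded upper-level bits $\hat{\textbf{c}}^{1:b-1}=\textbf{c}^{1:b-1}$, so no information about $c^b$ is discarded in passing from the joint symbol $\textbf{S}$ to the per-level decision problem.

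Next I would invoke the capacity-achieving property of the component polar codes. For a fixed $\textbf{H}$, level $b$ supports any rate $R_b<I_b$ with vanishing error probability as $N\to\infty$, provided the upper levels were decoded correctly; by induction on $b$ the whole multistage decoder then succeeds. Hence, for a given $\textbf{H}$, the target rate $R_{\rm{tot}}B=\sum_b R_b$ is supportable by MLC/MSD exactly when a split with $R_b<I_b$ for all $b$ exists, i.e. exactly when $\sum_b R_b<\sum_b I_b=I$. The set of realizations on which MLC/MSD fails is therefore $\{\textbf{H}:I<R_{\rm{tot}}B\}$, identical to the constellation-constrained outage region defining $\epsilon$ in (\ref{ch2:outage}), so the two outage probabilities agree in the long-code limit.

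The hard part will be that the rate split achieving this equality depends on $\textbf{H}$, whereas the broadcast channel provides no CSIT and the component rates must be fixed before transmission. The reverse inequality $\epsilon\le P_{\rm{out}}^{\rm{MSD}}$ is immediate, since $\sum_b R_b=R_{\rm{tot}}B$ forces $\{I<R_{\rm{tot}}B\}\subseteq\bigcup_b\{I_b<R_b\}$ (a total deficit implies a deficit at some level), reflecting that joint decoding dominates multistage decoding. To upgrade this to equality I would lean on the outage rule for the component rates derived in Section~\ref{sec:codedesign}: one shows that the fixed allocation $\{R_b\}$ can be chosen so that the level-wise outage events $\{I_b<R_b\}$ are aligned and their union carries probability $\epsilon$ rather than the loose union bound $\sum_b\epsilon_b$, which closes the gap and yields $P_{\rm{out}}^{\rm{MSD}}\to\epsilon$.
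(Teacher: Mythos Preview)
Your chain-rule identity $I=\sum_b I_b$ is the paper's entire proof: it simply rewrites $\epsilon=\Pr(I<R_{\rm tot}B)$ as $\Pr\big(\sum_b I_b<R_{\rm tot}B\big)$ and stops there. Everything after your first displayed equation goes beyond what the paper argues.

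You are right that the chain rule alone does not settle achievability: with fixed component rates and no CSIT, the MSD outage event is $\bigcup_b\{I_b<R_b\}$, not $\{\sum_b I_b<R_{\rm tot}B\}$, and these need not coincide. The paper does not address this distinction in the proof of Theorem~\ref{Theorem11}. Your attempt to close the gap by invoking the outage rule of Section~\ref{Componentcoderate} is problematic, however: that rule is Theorem~\ref{Theorem12}, stated and proved \emph{after} the present theorem, and its conclusion requires the hypothesis that for every realization of $\textbf{H}$ either all levels are in outage or none is---a condition the paper only claims to hold ``in practice'' for large codes, not something it proves. So your patch imports a later, partly empirical result to repair an earlier one. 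If your aim is to match the paper, stop after the chain rule; if your aim is a rigorous achievability statement, the alignment of the level-wise outage events needs an independent justification, not a forward reference to Theorem~\ref{Theorem12}.
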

\begin{proof}
Using the chain rule of the mutual information, we can connect the total mutual information and level-wise mutual informations as follows: 
\begin{equation}
\label{proof1}
\begin{split}
 \epsilon &={\rm{Pr}}(I<R_{\rm{tot}}B) ={\rm{Pr}}(I(\textbf{Y};\textbf{c}^{1:B}|\textbf{H})<R_{\rm{tot}}B) \\
 & ={\rm{Pr}}(\sum_{b=1}^B I(\textbf{Y};\textbf{c}^{b}|\textbf{c}^{1:b-1},\textbf{H})<R_{\rm{tot}}B) ={\rm{Pr}}(\sum_{b=1}^B I_b<R_{\rm{tot}}B),
\end{split}
\end{equation}
where $I_b$ is defined as the level-wise mutual information given by $I(\textbf{Y};\textbf{c}^{b}|\textbf{c}^{1:b-1},\textbf{H})$.
\end{proof}

To compute the outage probability, due to unavailability of closed-form expressions, (\ref{ch2:outage}) can be computed numerically or simulation can be used. In this paper, since the search space is continuous, the particle swarm optimization (PSO) \cite{Kennedy1995} is employed. For a detailed explanation of the steps, see \cite{Clerc2010} and the references therein. For  faster convergence of the algorithm, we modified the PSO according to the following principles. As the FER decreases, the number of realizations of $\textbf{H}$ to achieve an accurate estimation of the mutual information or the outage probability should be increased. When the PSO starts, a small number of realizations of $\textbf{H}$ is enough to achieve a coarse estimation of the outage probability for the initial population since they are chosen randomly and thus are most likely far from local or global minima. Therefore, we  increase the number of realizations of $\textbf{H}$ to estimate the mutual information or the outage as the best function value is lowered. For a detailed explanation about the optimization algorithm see \cite{khoshnevis2018}.

To design STBCs and SBCs using the outage probability, we use the conditions on shaping and power provided in Section~\ref{SBCcondition} to limit the search space and then we use the  modified PSO method to determine the rest of the parameters.  In this paper, by employing an additional linear search over the SNR \cite[Algorithm 2]{Balogun2017}, all STBCs and SBCs are optimized at $\epsilon=0.01$ for $N_{\rm{r}}=N_{\rm{t}}$. The optimized values for $\textbf{G}_{\rm{B}} $ for QPSK constellation and $R_{\rm{tot}}=1/2$ are given as $\alpha_1=\beta_1=0.314$, $\alpha_2=0.067 + 0.381\jmath$, $\beta_2= -0.070 + 0.384\jmath$. The optimized values of $\alpha_1$, $\beta_1$, and $\varphi$ for $R_{\rm{tot}}=1/2$  are as follows: for $\textbf{G}_{\rm{D}} $, 0.5, 0.5, and $330^\circ$; for TV $\textbf{G}_{\rm{D}} $, 0.5, 0.5, and $180^\circ$; for $\textbf{G}_{\rm{E}} $, 0.5, 0.5, and $45^\circ$; and for TV $\textbf{G}_{\rm{E}} $, 0.5, 0.5, and $45^\circ$. The  parameters $\alpha_1$, $\beta_1$, and $\varphi$ for $R_{\rm{tot}}=9/10$ are as follows: for $\textbf{G}_{\rm{D}} $, 0.5, 0.5, and $270^\circ$; for TV $\textbf{G}_{\rm{D}} $, 0.48, 0.52, and $120^\circ$; for $\textbf{G}_{\rm{E}} $, 0.33, 0.63, and $195^\circ$; and for TV $\textbf{G}_{\rm{E}} $, 0.34, 0.62, and $285^\circ$.

The outage probability results  of the outage-optimized schemes are shown in Fig.~\ref{ch9:fig:fig:outage_opt}. For low-to-moderate FEC code rates, TV $\textbf{G}_{\rm{E}} $ shows the minimum outage probability but performs very close to $\textbf{G}_{\rm{C}} $.  At high rates, for moderate values of the outage probability (e.g., $0.001$), $\textbf{G}_{\rm{E}} $ shows the least outage probability while for lower values of the outage probability, TV $\textbf{G}_{\rm{D}} $ is better. Indeed, the TV $\textbf{G}_{\rm{D}} $ benefits from a higher diversity, which is useful at high spectral efficiencies, while $\textbf{G}_{\rm{E}} $ benefits from more flexibility in optimization, which results in better codes for low-to-moderate spectral efficiencies. 

\begin{figure}
\centering  
\subfigure[]{\label{fig:a}\includegraphics[width=0.47\textwidth]{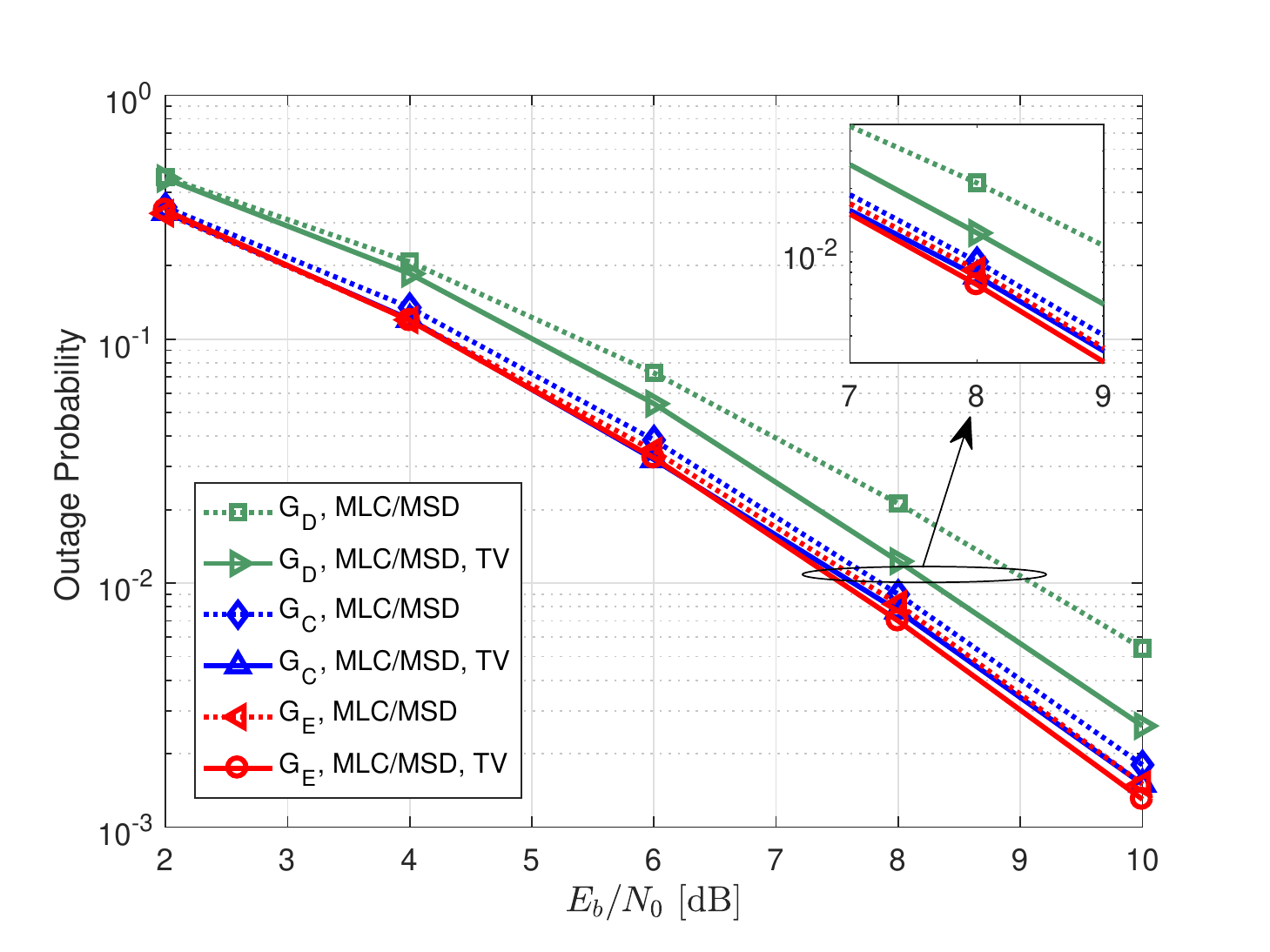}}
\subfigure[]{\label{fig:b}\includegraphics[width=0.47\textwidth]{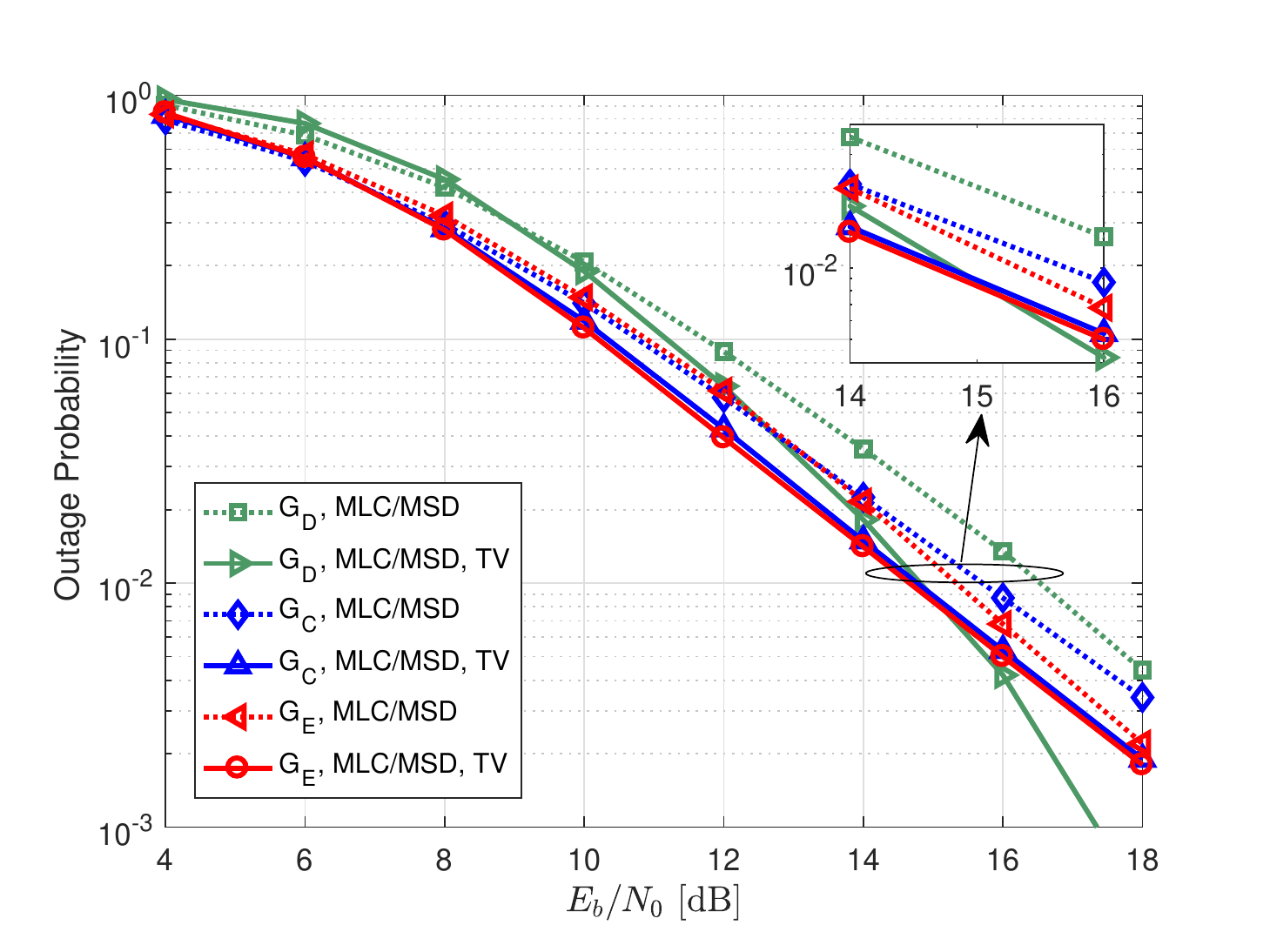}}
\caption{Outage probability comparison of $\textbf{G}_{\rm{C}} $, $\textbf{G}_{\rm{D}} $, and $\textbf{G}_{\rm{E}} $ and their time-varying variant with a) $R_{\rm{tot}}=0.5$ and 4 bits-per-channel-use (bpcu), and b) $R_{\rm{tot}}=0.9$ and 7.2 bpcu, both in a $2\times 2$ antenna configuration.}
\label{ch9:fig:fig:outage_opt}
\end{figure}

\section{Polar Coded-modulation Design}
\label{sec:codedesign}

In this section, we explain different aspects of polar coded-modulation design including determining component code rates, designing the SPM, and the coded-modulation design procedure.  To design the polar code, the information set should be determined. In this paper, we use the simulation-based best bit-channels rule proposed in \cite{Balogun2017}. In this method, we measure the number of first error events for the bit-channels of all levels and choose the bit-channels with the lowest error probabilities to determine $\{R_1,...,R_B\}$. 

\subsection{Determining Component Code Rates based on the Outage Probability}
\label{Componentcoderate}

In this section, we derive a rule based on the outage probability for determining component code rates and compare it with the best bit-channels rule. Two main design rules for determining component code rates of MLC are the capacity rule and the equal error probability rule for an ergodic channel \cite{Wachsmann1999}. When using the capacity rule, the component codes rates are chosen based on the constellation-constrained subchannel capacity. The capacity rule is only optimal for capacity achieving codes. However, for non-capacity achieving codes, the equal error probability rule is typically employed due to the possibility of deriving bounds on the performance of the coded-modulation scheme. For non-ergodic channels, the parallel concepts are the outage rule and the equal error rate rule. The equal error rate rule can be applied similarly to the ergodic channel. However, the outage rule needs to be derived.

When adjusting the level code rates based on the outage rule, we would like to set $R_b=C_{\epsilon_b,b}$ for the optimal $\epsilon_b$ values that minimize the total outage of the coded-modulation scheme using independent decoding of different levels. Let us denote the total outage probability of sequential decoding of levels as $P_{\rm{out}}^{\rm{SD}}$. This is a lower bound on the FER of the coded-modulation scheme.

\begin{theorem}
\label{Theorem12}
$P_{\rm{out}}^{\rm{SD}}$ is minimized if level-wise outage probabilities are equal and for any ${\normalfont  \textbf{H}}$  realization, either all levels are in outage or no level is in outage.
\end{theorem}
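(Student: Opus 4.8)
The plan is to rewrite the sequential-decoding outage as a union of the per-level outage events, sandwich it between a rate-allocation-independent floor and the value attained under the two stated conditions, and show the conditions make the floor tight. First I would argue the representation itself: under multistage decoding, level $b$ is recoverable (given that all higher levels were decoded correctly) precisely when $I_b \ge R_b$, so the scheme is in outage exactly when \emph{at least one} level is. Hence
\begin{equation}
P_{\rm{out}}^{\rm{SD}} = {\rm{Pr}}\Big(\bigcup_{b=1}^{B} \{I_b < R_b\}\Big).
\end{equation}

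Next I would establish the lower bound. Since $\sum_b R_b = R_{\rm{tot}}B$ for every splitting of the total rate and, by the chain rule used in (\ref{proof1}), $\sum_b I_b = I$, the set inclusion $\{I < R_{\rm{tot}}B\} \subseteq \bigcup_b \{I_b < R_b\}$ holds; its contrapositive is simply that $I_b \ge R_b$ for all $b$ forces $\sum_b I_b \ge \sum_b R_b$. Taking probabilities yields $P_{\rm{out}}^{\rm{SD}} \ge {\rm{Pr}}(I < R_{\rm{tot}}B) = \epsilon$, the constellation-constrained outage identified in Theorem~\ref{Theorem11}. The key observation is that this floor does not depend on how the total rate is partitioned among levels, so $\epsilon$ is a bound that no allocation can beat.

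Finally I would show the two conditions make this bound tight. If for almost every $\textbf{H}$ either all levels or no level is in outage, then on the first set every $I_b < R_b$ (so $I < R_{\rm{tot}}B$), while on the second every $I_b \ge R_b$ (so $I \ge R_{\rm{tot}}B$); thus $\bigcup_b \{I_b < R_b\}$ and $\{I < R_{\rm{tot}}B\}$ coincide up to a null set, giving $P_{\rm{out}}^{\rm{SD}} = \epsilon$. Because the per-level outage events are then identical, they share one common probability, which is exactly the equal-$\epsilon_b$ requirement; so the event-alignment condition already forces the equalization condition, and together they attain the minimum $\epsilon$ by setting each $R_b = C_{\epsilon_b,b}$ with a common $\epsilon_b = \epsilon$.

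I expect the equality analysis, not the bound, to be the main obstacle: I must argue that tightness of the union is \emph{equivalent} to coincidence of the outage events, and reconcile the two listed conditions (the event-alignment condition is the stronger one and implies the equal-outage condition). It is also worth stressing that these describe an idealized extremal configuration—the joint law of $(I_1,\dots,I_B)$ is fixed by the channel and cannot literally be reshaped—so the result is best read as identifying the design target, namely equalized level outages $\epsilon_b = \epsilon$, toward which the rate allocation should be driven.
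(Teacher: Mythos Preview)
Your proof is correct but takes a different route from the paper. The paper bounds the union below by $\max_b p(I_b<R_b)$, uses the all-or-none hypothesis to make that bound an equality, and then separately invokes a rate-shifting argument (lowering $R_1$ forces some $R_2$ up, increasing the max) to conclude that equal level-wise outages minimize $\max_b \epsilon_b$ under the sum-rate constraint. You instead bound the union below by the allocation-independent floor $\epsilon={\rm Pr}(I<R_{\rm tot}B)$ via the chain rule, and observe that the all-or-none hypothesis collapses the union directly to $\{I<R_{\rm tot}B\}$, hitting $\epsilon$ exactly. Your approach is tidier and ties the result back to Theorem~\ref{Theorem11}: it shows the stated conditions achieve the constellation-constrained outage itself, and it makes transparent (as you note) that the event-alignment condition alone already forces equal $\epsilon_b$. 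The paper's rate-perturbation argument, while less sharp, is more operational: it explains \emph{why} unequal $\epsilon_b$ hurt even when the idealized all-or-none assumption is only approximately met, which is the regime the subsequent design rule (Algorithm~\ref{AlgorithmRates}) actually targets.
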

\begin{proof}
In MSD, an outage event occurs as soon as at least one level is in outage. Thus, we can write $P_{\rm{out}}^{\rm{SD}}=p( \displaystyle \bigcup_{b}\{I_b<R_b\})$. Because the probability of the events $\{I_b<R_b\}$ for all $b=1,...,B$ is never less than the maximum of the probabilities of individual events, we have $P_{\rm{out}}^{\rm{SD}} \geq \displaystyle \max_b {p( I_b<R_b)}$, with equality  if the events are fully dependent. Since we assumed for all realizations of $\textbf{H}$, either all levels are in outage or no level is in outage, the events are fully dependent and therefore, we have $P_{\rm{out}}^{\rm{SD}} = \displaystyle \max_b {p(I_b<R_b)}$. Now let us assume $C_{\epsilon_b,b}$ is chosen so that the level-wise outage probabilities are equal. Setting $R_1<C_{\epsilon_1,1}$ results in decreasing $p(I_1<R_1)$. However, since the total $\sum_bR_b$ is constant, we should set $R_2>C_{\epsilon_2,2}$ that results in increasing $p(I_2<R_2)$. Therefore, the maximum outage probability and consequently $P_{\rm{out}}^{\rm{SD}}$ increase. Thus, we should choose equal level-wise outage probabilities  in order to minimize $P_{\rm{out}}^{\rm{SD}}$.
\end{proof}

Note that in practice when $\epsilon_1=\epsilon_2=...=\epsilon_B$ for large FEC codes, we observe that either all levels are in outage or no level is in outage. Therefore, based on the results from Theorem~\ref{Theorem12}, to determine the level code rates $R_b$, we should minimize the level-wise outage probability $\epsilon_1=\epsilon_2=...=\epsilon_B$ such that  $\sum_bC_{\epsilon,b}=\sum_bR_b=R_{\rm{tot}}$. Thus, in an attempt to find the code rates, we can estimate the $C_{\epsilon_b,b}$s using (\ref{ch2:outagecapacity}) for a given $\epsilon_b$ and check whether  $\sum_bC_{\epsilon,b}=R_{\rm{tot}}$. Note that $\epsilon_b$ should be changed from zero to one with the step length $M$. However, the search space can be substantially limited by starting from $\epsilon_b=\epsilon$. In fact, the total outage of the joint decoding, $\epsilon$, is a lower bound on the outage of different levels since in Theorem~\ref{Theorem12} we assumed $P_{\rm{out}}^{\rm{SD}} = \displaystyle \max_b p(I_b<R_b)$ and $P_{\rm{out}}^{\rm{SD}}\rightarrow \epsilon$ only when $N\rightarrow\infty$.

Algorithm~\ref{AlgorithmRates} presents the outage rule for determining level code rates. In this algorithm, Function Estimate\_Mu(.) estimates the $b^{\rm{th}}$ level mutual information by substituting the probabilities of transmitting zero and one at each level corresponding to (\ref{probabilityzeroone}) in (\ref{ch9:mutualinfo}). In each iteration, the level-wise outage capacities $C_{\epsilon_b,b}$ are estimated. If $\sum_b C_{\epsilon_b,b}$ is less than the target spectral efficiency, the threshold $\hat{\epsilon}$ is increased by a factor $M$ and the same steps are repeated. Here, we set $M=1.05$. 
\begin{algorithm}
 \DontPrintSemicolon
     \SetKwInOut{Input}{Input}
     \SetKwInOut{Output}{Output} 
     \Input{$\textbf{G}_{\rm{X}}$, $R_{\rm{tot}}$}
     \Output{Component code rates $\textbf{R}$}
     \nonl   \textbf{Procedures}:  [${\textbf{I}}_{1:B}$]=Estimate\_Mu($\textbf{G}_{\rm{X}}$): Estimates the $1^{\rm{st}}$ to $B^{\rm{th}}$  level mutual information using (\ref{ch9:mutualinfo}) for $\hat{N}$ realizations of $\textbf{H}$ and outputs them in an $\hat{N} \times B$ matrix $\textbf{I}_{1:B}$. Find($\textbf{u}$,$v$): Outputs  $\displaystyle \argmax_i{u_i}<v$.\;
      [$\textbf{I}_{1:B}$]=Estimate\_Mu($\textbf{G}_{\rm{X}}$)\;
      $\hat{\epsilon}=\boldsymbol{\int_{\textbf{H}}} \mathbb{1}\big(I,R_{\rm{tot}}B\big) f_{\textbf{H}} d\textbf{H}$  \Comment{Using (\ref{ch2:outage})} \;
      \Do{$\sum_b C_{\epsilon_b,b}< R_{\rm{tot}}B$} {
      \lFor {b=1:B}{$C_{\epsilon}=\frac{1}{\hat{N}}$ Find($\textbf{I}_b$,$\hat{\epsilon}$)  \Comment{Using (\ref{ch2:outagecapacity}) \hspace{4.2cm}} }
       $\hat{\epsilon}=\hat{\epsilon}M$\;
       }
       $\textbf{R}=[C_{\hat{\epsilon},1},C_{\hat{\epsilon},2},...,C_{\hat{\epsilon},B}]$\;
       return $\textbf{R}$ \;
    \caption{{Outage Rule for Determining Component Code Rates}}
    \label{AlgorithmRates}
\end{algorithm}

Fig.~\ref{fig:outagerule} shows the comparison of the FERs of the equal FER rule, the outage rule, and the simulation-based best bit-channels rule. We observe that the FER of the equal FER rule is 1 dB worse than the other two rules and the FER of the best bit-channels rule is approximately the same as that of the outage rule. This shows that the simulation-based rule can predict the FERs correctly. For the rest of the paper, we use the simulation-based rule to determine the code rates.

\begin{figure}[h!]
\center
 \includegraphics[width=0.47\textwidth]{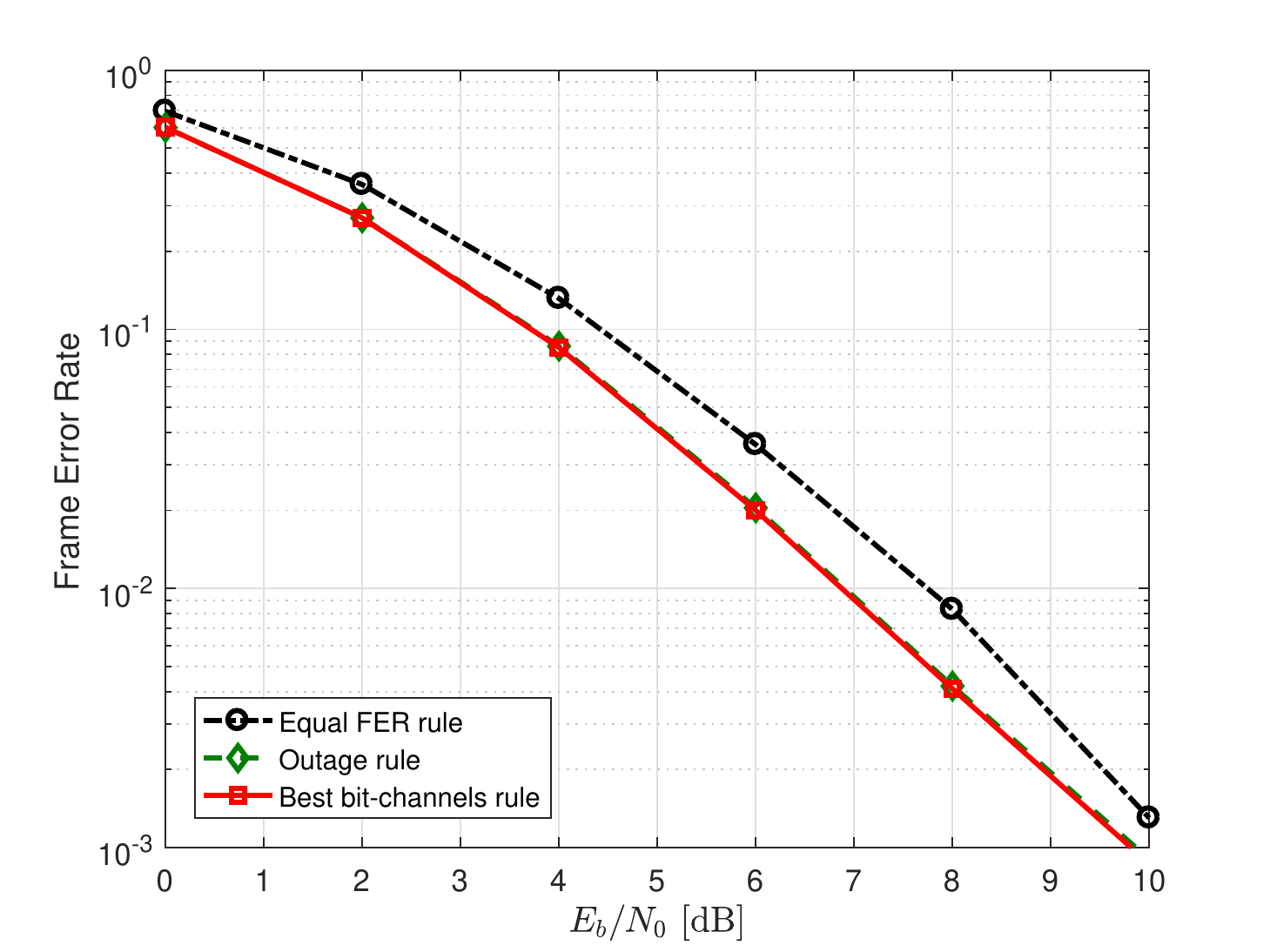}
\caption{FER comparison of polar code level rate determining algorithms, all for the Alamouti code with 2 bpcu and $N_{\rm{tot}}=1024$ in a $2\times 2$ antenna configuration.}
\label{fig:outagerule}
\end{figure}

\subsection{Labelling Algorithm}
\label{labelling}

Typically, set-partitioning algorithms with relevant channel measures are used to construct good bit-to-symbol mappings for MLC/MSD \cite{Ungerboeck1987}. Recently, for general irregular multidimensional constellations, in \cite{Balogun2017} a set-merging algorithm to create a set-partitioning is proposed that works as follows. At the first step, for each symbol, the most distant symbol is found and the minimum of them is considered as a distance threshold for pairing the sets denoted as $\tau$. Then every symbol is paired with the closest symbol with the minimum distance of $\tau$ and labels $0$ and $1$ are assigned to the first and the second symbols in each pair.  The threshold $\tau$ ensures that the set of distances of paired points does not have a high variance. After pairing every two symbols, the distance of each two-symbol set with respect to other two-symbol sets is measured and the minimum of the maximum of their distances is chosen as the new $\tau$ and the mentioned procedure is repeated in $B$ consecutive steps. In \cite{Balogun2017}, for fast fading channel, the Frobenius norm is used as a measure of distance between subsets. The SPM typically regularizes the component code rates in an incremental order, i.e., $R_1<R_2<...<R_B$. Note that the SPM indeed increases the variability between channel capacities. 

For the slow fading channel, based on the result of Theorem~\ref{Theorem12}, we are interested in equalizing the level-wise outage probabilities. If we assume that the code rates are in an incremental order, we expect that for any given fading realization, the average mutual information of levels is also incremental, i.e., $I_1<I_2<...<I_B$. Thus, we can employ the algorithm in \cite{Balogun2017} assuming that the appropriate measure is used. If we use approximately universal codes, the corresponding design measure is the determinant criterion when $N_{\rm{r}}>N_{\rm{t}}$ and the product of smallest $\min(N_{\rm{t}},N_{\rm{r}})$ singular value of $\Delta_{i,j}$ for $N_{\rm{r}}<N_{\rm{t}}$ \cite{Tavildar2006}. For SBCs, the distance is found to be the Frobenius norm using the pairwise error probability analysis according to \cite{Tarokh1998}.

For TVSBCs, fading coefficients change slowly while AWGN and TV sequences change fast. Therefore, we need a class of bounds that can model both effects to achieve good design measures. However, bounds proposed in \cite{Tarokh1998} cannot model both slow- and fast-changing parameters since they are derived under the assumption that the coefficients of a STBC remains constant during one codeword transmission. Alternatively, the pairwise outage probability\footnote{In \cite{Vojcic1994}, it is shown that the outage probability can be upper bounded by union bounds of pairwise outage probabilities. Thus, the pairwise outage probability is a relevant measure for the system behavior.} may be employed as a measure to design good SPM since it can model both kinds of parameters. 

Assuming that the cutoff rate is a lower bound on the mutual information, we can derive bounds on the outage probability. The cutoff rate, $R_0$, for a fixed STBC and \textit{a priori} probabilities, can be written as 
\begin{equation}
\newcommand\abs[1]{\left|#1\right|^2}
\label{cutoff12}
R_0(\textbf{S}|\textbf{H})= -\text{log}_2 \bigg\{  \sum_{i=1}^{2^B} \sum_{j=1}^{2^B} \hat{\rho}_{i,j} \bigg\},
\end{equation}
where $\hat{\rho}_{i,j} = \rho(\textbf{S}_i,\textbf{S}_j|\textbf{H}) \textrm{Pr}(\textbf{S}_i)\textrm{Pr}(\textbf{S}_j)$, in which $\rho(\textbf{S}_i,\textbf{S}_j|\textbf{H})$ is the pairwise Bhattacharyya coefficient,  defined as
\begin{equation}
\newcommand\abs[1]{\left|#1\right|^2}
\label{pairwise1}
{\rho(\textbf{S}_i,\textbf{S}_j|\textbf{H})= \boldsymbol{\int}} \sqrt{ {p(\textbf{Y}|\textbf{S}_i,\textbf{H}) } p(\textbf{Y}|\textbf{S}_j,\textbf{H})}d\textbf{Y}.
\end{equation}

 Substituting the cutoff rate, (\ref{ch2:outage}) can be upper-bounded as 
\begin{equation}
\label{ch2:outagecutoff}
\epsilon \leq p({R_0(\textbf{S}|\textbf{H})}<R_{\rm{tot}}B).
\end{equation}
Note that since the cutoff rate is a lower bound on the mutual information, the rate region of the outage is larger and thus (\ref{ch2:outagecutoff}) is an upper bound on the outage probability. The bound can be written as
\begin{equation}
\label{ch2:cutoffboundoutage}
p(R_0({\textbf{S}|\textbf{H})}<R_{\rm{tot}}B)= p \big( 2^{-R_{\rm{tot}}B}<\sum_{i=1}^{2^B}\sum_{j=1}^{2^B}\hat{\rho}_{i,j} \big).
\end{equation}

 Similar to (\ref{ch2:cutoffboundoutage}), we derive a closed-form for the upper bound on the pairwise outage probability (UBPOP). We start with SBCs.

\begin{proposition}
\label{proposition1}
For space block codes an upper bound on the pairwise outage probability is given as
\begin{equation}
\label{ch2:outagecutoffclosed}
\newcommand\norm[1]{\left\lVert#1\right\rVert}
p(q_{i,j} <\hat{\rho}_{i,j})=\frac{\gamma(N_{\rm{r}},-4N_0\norm{\Delta_{i,j}}_{\rm{F}}^{-2}\ln(q_{i,j}) )}{\Gamma(N_{\rm{r}})},
\end{equation}
where $q_{i,j}\geq0$ and $\gamma$ and $\Gamma$ are lower incomplete gamma and gamma functions, respectively.
\end{proposition}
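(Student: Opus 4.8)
The plan is to evaluate the pairwise Bhattacharyya coefficient in closed form, identify the distribution of the resulting fading-dependent exponent, and read off the stated probability as a Gamma cumulative distribution function (CDF).

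First I would substitute the channel likelihood (\ref{prob}) into the definition (\ref{pairwise1}). The product of the two square-rooted Gaussians carries $\|\textbf{Y}-\textbf{S}_i\textbf{H}\|_{\rm F}^2+\|\textbf{Y}-\textbf{S}_j\textbf{H}\|_{\rm F}^2$ in the exponent. Completing the square in $\textbf{Y}$ about the midpoint $(\textbf{S}_i+\textbf{S}_j)\textbf{H}/2$ separates a term that integrates to the Gaussian normaliser from a residual $\tfrac12\|\Delta_{i,j}\textbf{H}\|_{\rm F}^2$, leaving
\[
\rho(\textbf{S}_i,\textbf{S}_j\mid\textbf{H})=\exp\!\Big(-\frac{\|\Delta_{i,j}\textbf{H}\|_{\rm F}^2}{4N_0}\Big),
\]
i.e. the same exponent that governs the pairwise-error bound (\ref{PEP1}).

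Next I would determine the distribution of $X\coloneqq\|\Delta_{i,j}\textbf{H}\|_{\rm F}^2$ under $f_{\textbf{H}}$. Here the space-block structure is essential: with $L=1$ the difference $\Delta_{i,j}$ is a $1\times N_{\rm t}$ row vector, so $\Delta_{i,j}\textbf{H}$ is the $1\times N_{\rm r}$ vector whose $r$-th entry $\sum_{n_{\rm t}}\delta_{n_{\rm t}}h_{n_{\rm t},r}$ depends only on the $r$-th column of $\textbf{H}$. Since the columns are independent and each entry is a linear combination of i.i.d. $\mathcal{CN}(0,1)$ variables, the $N_{\rm r}$ entries are independent $\mathcal{CN}(0,\|\Delta_{i,j}\|_{\rm F}^2)$; hence $X$ is a sum of $N_{\rm r}$ i.i.d. exponentials of mean $\|\Delta_{i,j}\|_{\rm F}^2$, i.e. an Erlang (Gamma) variable of shape $N_{\rm r}$ and scale $\|\Delta_{i,j}\|_{\rm F}^2$. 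Because the priors in $\hat{\rho}_{i,j}$ are deterministic (taken uniform, or absorbed into the threshold $q_{i,j}\in(0,1]$) and $\rho$ is strictly decreasing in $X$, the event $\{q_{i,j}<\hat{\rho}_{i,j}\}$ is equivalent to $\{X<-4N_0\ln q_{i,j}\}$. Evaluating the Erlang CDF $F_X(t)=\gamma(N_{\rm r},t/\|\Delta_{i,j}\|_{\rm F}^2)/\Gamma(N_{\rm r})$ at $t=-4N_0\ln q_{i,j}$ then gives exactly (\ref{ch2:outagecutoffclosed}).

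I expect the distributional step to be the main obstacle. The clean single-parameter Erlang form is available precisely because $L=1$ forces $\Delta_{i,j}$ to be a vector, so every receive branch contributes an exponential with the common scale $\|\Delta_{i,j}\|_{\rm F}^2$. For a genuine space-time code with $L>1$, $X=\textbf{H}^{\mathcal H}\Delta_{i,j}^{\mathcal H}\Delta_{i,j}\textbf{H}$ is a quadratic form whose law is a scale-mixture of gammas indexed by the eigenvalues of $\Delta_{i,j}^{\mathcal H}\Delta_{i,j}$ and would not collapse to a single incomplete-gamma term; the only other point meriting care is the treatment of the prior-probability factors discussed above.
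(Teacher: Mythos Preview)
Your proposal is correct and follows essentially the same route as the paper: compute the Bhattacharyya coefficient in closed form to obtain $\exp(-\|\Delta_{i,j}\textbf{H}\|_{\rm F}^2/4N_0)$, use the $L=1$ structure to recognise $\|\Delta_{i,j}\textbf{H}\|_{\rm F}^2$ as a sum of $N_{\rm r}$ i.i.d.\ exponentials (equivalently chi-squared with $2N_{\rm r}$ degrees of freedom), and read off the incomplete-gamma CDF. The only cosmetic differences are that the paper invokes a cited lemma rather than completing the square, and does not explicitly discuss absorbing the prior factors into $q_{i,j}$ as you do; your remark on why the argument fails for $L>1$ also anticipates exactly what the paper treats separately in its subsequent approximation.
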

\begin{proof}
In Appendix.~\ref{sec:appendixoutage}.
\end{proof}

Note that $q_{i,j}$ in general should be different for each pair since the distance of pairs in a STBC and correspondingly the effective SNR and their effect on union bounds may be different. However, finding $q_{i,j}$ needs complicated optimizations. Instead, we can find a relatively good value for $q_{i,j}$ as follows. Using (\ref{cutoff12}), the pairwise events $\{q_{i,j}<\hat{\rho}_{i,j}\}$ can be written as $\{R_0({\hat{\textbf{S}}_{i,j}|\textbf{H})}<R_{i,j}\}$ where $R_{i,j}=-\log_2(\frac{1}{2}+\frac{1}{2}q_{i,j})$ and $R_0(\hat{\textbf{S}}_{i,j}|\textbf{H})$ is the pairwise cutoff rate. Assuming the average spectral efficiency $R_{\rm{tot}}$ equals $R_{i,j}$ for all pairs of the STBC, $q_{i,j}$ can be found as $q=2^{1-R_{\rm{tot}}}-1$. Next, to make the upper bound manageable and find a relevant measure for designing SPM, we can approximate the UBPOP for STBCs as follows:
\begin{equation}
\label{ch2:outagecutoffclosedSTBC}
p(q_{i,j} <\hat{\rho}_{i,j})\approx\frac{\gamma(\frac{\mu_1^2}{\mu_2},-4N_0 \frac{\mu_1}{\mu_2} \ln(q_{i,j}) )}{\Gamma(\frac{\mu_1^2}{\mu_2})},
\end{equation}
where 
\begin{equation}
\label{moments}
  \mu_1= N_{\rm{r}}\sum_l\sum_{n_{\rm{t}}} |\delta_{l,n_{\rm{t}}}|^2, \quad
  \mu_2\approx N_{\rm{r}}\sum_{u=1}^{L} \sum_{v=1}^{L} |\sum_{n_{\rm{t}}} \delta_{u,n_{\rm{t}}}\delta_{v,n_{\rm{t}}}^*|^2.
\end{equation}
The approximation is derived in Appendix.~\ref{sec:appendixoutage2}.

Note that since we approximate the upper bound in (\ref{ch2:outagecutoffclosedSTBC}), it is not anymore an upper bound but it remains a relevant measure for designing SPM. This class of measure can be used to design the SPM for STBCs. However, for TVSBCs, we need to model TV sequences as well. In this case, we can model the TV sequence using the cutoff rate since it is a fast-changing parameter. An approximation of UBPOP for a $2\times N_{\rm{r}}$ TVSBC is
\begin{equation}
\label{ch2:outagecutoffclosedTVSBC}
p(q_{i,j} <\hat{\rho}_{i,j})\approx1-\text{Q}\Bigg(\frac{\ln(-4N_0\ln(q_{i,j}))+\ln(\frac{1}{\mu_1}+\frac{\mu_2}{\mu_1^3})}{\sqrt{\ln(1+\frac{\mu_2}{\mu_1^2})}}\Bigg),
\end{equation}
where $\text{Q}(.)$  is the Gaussian tail function. The first two moments $\mu_1$ and $\mu_2$ are given as
\begin{equation}
\label{firsttwomomnets}
\begin{split}
\mu_1= & N_{\rm{r}}\big( |\delta_{1,1}|^2+|\delta_{1,2}|^2\big) -2 \big(a_1|\delta_{1,1}\delta_{1,2}| E[\Omega] +a_2\big),  \\
\mu_2= & N_{\rm{r}}\big(|\delta_{1,1}|^4+|\delta_{1,2}|^4)+4a_1^2|\delta_{1,1}\delta_{1,2}|^2(N_{\rm{r}}-E[\Omega]^2) \\
&-4a_1N_{\rm{r}}\big( |\delta_{1,1}|^2+|\delta_{1,2}|^2\big)|\delta_{1,1}\delta_{1,2}| (E[|h_{1,n_{\rm{r}}}|^2\Omega]-E[\Omega]),
\end{split}
\end{equation}
 where $\Omega=|\sum_{n_{\rm{r}}}h_{1,n_{\rm{r}}}h_{2,n_{\rm{r}}}^*|$ and coefficients $a_1$ and $a_2$ are defined in (\ref{approximatelnI}). The approximation is derived in Appendix.~\ref{sec:appendixoutageSBCTV}.

The provided UBPOPs  can be used as a measure to construct the mapping. However, it turns out that the use of the determinant criterion and UBPOP for STBCs result in the same SPM. The same result is valid for SBCs when we use the corresponding UBPOP and the Frobenius norm, although for TVSBCs, the use of UBPOP results in an improvement compared to the Frobenius norm.

To overcome the shortcomings of these measures, we can analyze different conditions that may happen for pairwise measures in a slow fading channel and modify the labelling algorithm in \cite{Balogun2017}. The general form of the bounds derived in this section is $\textrm{Pr}(q_{i,j} <\hat{\rho}_{i,j})$. We expect that as the distance of two pairs increases, $\hat{\rho}_{i,j}$ decreases and correspondingly $-\textrm{Pr}(q_{i,j} <\hat{\rho}_{i,j})$ increases. However, due to a variety of reasons, e.g., the effect of pairwise measures on the sum cutoff rate or on the equality of the outage probability of different levels, $-\textrm{Pr}(q_{i,j} <\hat{\rho}_{i,j})$ may decrease with the distance. But it is unlikely that it happens when the difference in pairwise distances is relatively large. Motived by this explanation, we consider the same value for distances of two pairs of points if the relative difference of their distances is smaller than a threshold. The threshold is determined in a few iterations by constructing the mapping and the polar coded-modulation. In fact, we use the simulation-based design to find the best threshold. In practice, we observe that the threshold remains constant (typically around 0.1-0.2) for a wide range of SNRs. Also while this threshold is substantially effective on unoptimized SBCs, e.g., $\textbf{G}_{\rm{C}} $, it has a negligible effect on optimized SBCs, OSTBCs, and approximately universal STBCs. 

\begin{figure*}[!tb]
\centering   
 \includegraphics[width=0.7\textwidth]{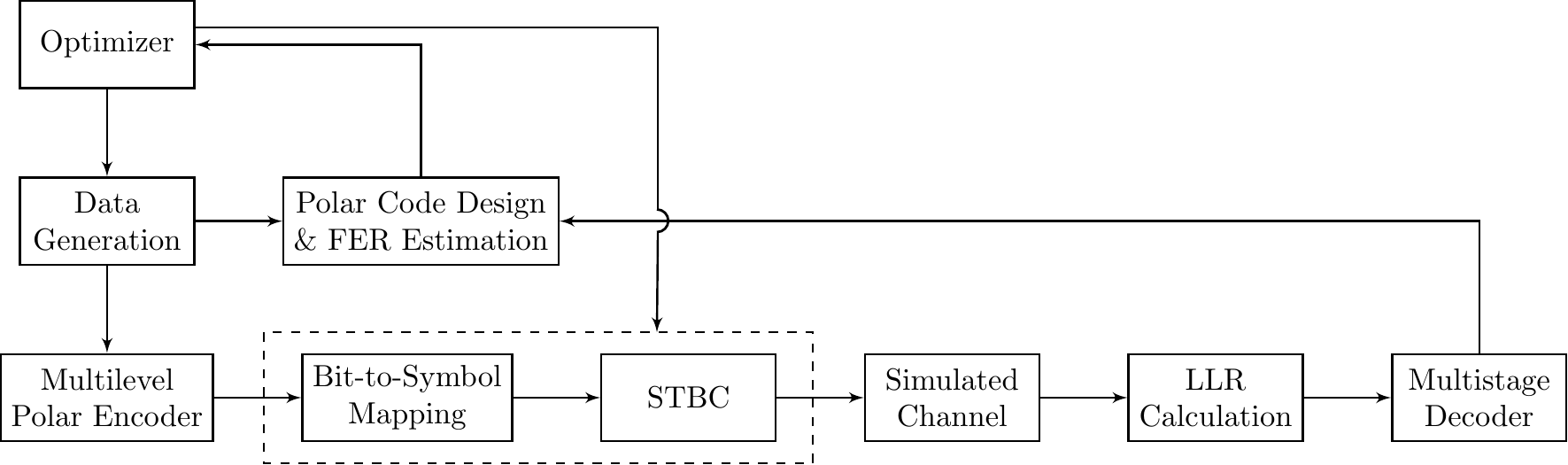}
\caption{Block diagram of the joint optimization of polar codes and  STBCs.}
\label{fig:blockdiagram}
\end{figure*}

\subsection{Design Procedure for the MLPCM}
\label{codedModulationconstruction}

To design a MLPCM for STBCs, the multidimensional SPM is generated for a STBC using the algorithm reviewed in Section~\ref{labelling}. Then, the multilevel polar code is designed for the constructed set-partitioned STBC scheme by using the code design method proposed in \cite{Balogun2017}. In this paper, MLPCM is designed at the minimum SNR such that a target FER of $0.01$ is achieved.

\begin {table*}[!b]
  \caption{Summary of compared signal design methods and the corresponding figure numbers.}
\label{tbl1}
\begin{center}
  \begin{tabular}{ | c | c | c | c | c | c | c | c | c | c| c| c|}
    \hline
    STBC &  $\textbf{G}_{\rm{A}}$ &  $\textbf{G}_{\rm{B}}$ & $\textbf{G}_{\rm{D}}$& $\textbf{G}_{\rm{D}}$, TV & $\textbf{G}_{\rm{E}}$ & $\textbf{G}_{\rm{F}}$ & $\textbf{G}_{\rm{F}}$, TV \\ \hline
    Rank and Determinant Criteria (RD) & 4 & 4  & & &  & & \\ \hline
    Mutual Information (MI) &  & 4 & & & & & \\ \hline
    Outage Probability (OP) & & 4 & 6& 6&5  & & \\ \hline
    Joint Design of FEC and STBC (JD) &  & 4& & 6&  &7&7 \\ \hline
  \end{tabular}
\end{center}
\end {table*}

\section{Joint Optimization of FEC Code and STBC}
\label{sec:JointOptimization}

The outage probability is a good criterion for designing the modulation (or STBC) for FEC codes that perform close to the outage. However, when there is a gap between the performance of an FEC code and the outage, the outage probability is not necessarily a good measure. This gap exists for all currently available finite length FEC codes, even in the presence of powerful decoders, and motivates the research on the joint optimization of STBCs and practical FEC codes.

To optimize concatenated schemes, bounds on the performance of the concatenated FEC code and STBC are needed. For cases such as BICM, if the LLR distribution is nearly Gaussian, general bounds on the performance of the concatenated schemes can be derived \cite{Gresset2008}. However, for MLC/MSD in which the LLR distribution of each subset within each level is different, deriving any closed-form bound on the performance of the system is difficult. Furthermore, optimizing the sum cutoff rate bound (see \cite{Arikan2006}) is numerically intensive. Instead, the simulation-based design method, explained in Section~\ref{sec:codedesign}, can determine the information set and the FER of the limited length MLPCM. In fact, simulation-based design plays the role of a bound on the performance of the system since it can approximate the FER. Thus, using the simulation-based method, joint optimization of limited length FEC codes and STBCs is possible.

In simulation-based polar code design, given a specific code rate and SNR value, the information set for MLPCM is chosen and the FER of the polar coded-modulation is determined \cite{Balogun2017}.  To jointly design the polar code and STBC, the best STBC matched to the polar code structure should be determined.  To this end, the optimizer generates a set of parameters of a specific STBC, e.g., $\textbf{G}_{\rm{B}} $. Then, for each combination of parameters of the STBC, the set-partitioning algorithm described in Section \ref{labelling} is applied to find a good SPM for the generated STBC and polar code design procedure for the new set-partitioned STBC is repeated. Finally, the best match of the information set, SPM, and parameters of a specific STBC corresponding to the lowest FER is chosen. The modified PSO, described in Section~\ref{sec:STBCoutage}, is used for the joint optimization. The block diagram of the joint optimization method is shown in Fig.~\ref{fig:blockdiagram}.

To compare the performance of different design methods in Section~\ref{sec:numresult}, we optimized a few codes using the joint design algorithm at a minimum SNR corresponding to a FER of $0.01$. The  parameters of theses codes are as follows; for $\textbf{G}_{\rm{B}} $, $\alpha_1=\beta_1=0.765$, $\alpha_2=-0.265+0.587\jmath$, and $\beta_2=0.587+0.265\jmath$; for TV $\textbf{G}_{\rm{D}} $, $\alpha_1=\alpha_2=0.5$ and $\beta_2=\jmath\beta_1=-0.410 + 0.287\jmath$; for $\textbf{G}_{\rm{F}}$, $\beta_1=0.358 + 0.358\jmath$, $\beta_2=-0.358-0.358\jmath$ and $\beta_3=-0.506$; and for TV $\textbf{G}_{\rm{F}}$, $\beta_1=0.359+0.359\jmath$, $\beta_2=-0.359-0.359\jmath$, and $\beta_3=-0.507$.

Note that the joint optimization method is less complex than the outage probability optimization for short to moderate length codes since the relatively precise numerical evaluation of the outage probability in (\ref{ch2:outage}) is expensive. However, for long codes, the joint optimization method is expensive and it turns out that the outage probability optimization is cheaper. Furthermore, in online optimization when the statistics of the communication channel are not known, both optimization based on the outage probability  and the joint optimization methods can be used. But employing the joint optimization is more affordable since it does not need an additional software as the polar encoder and decoder are embedded in the system.

\section{Numerical Results and Discussions}
\label{sec:numresult}

\begin {table*}[t!]
  \caption{Summary of compared SPM design rules and the corresponding figure numbers.}
\label{tbl2}
\begin{center}
  \begin{tabular}{ | c | c | c |c | c | c | c | c | c | c| c| c|}
    \hline
    STBC & Alamouti & $\textbf{G}_{\rm{A}}$ &  $\textbf{G}_{\rm{B}}$ & $\textbf{G}_{\rm{C}}$ & $\textbf{G}_{\rm{C}}$, TV & $\textbf{G}_{\rm{D}}$& $\textbf{G}_{\rm{D}}$, TV & $\textbf{G}_{\rm{E}}$ & $\textbf{G}_{\rm{F}}$ & $\textbf{G}_{\rm{F}}$, TV \\ \hline
    Frobenius norm (FN) & 4 & 4 & 4 & 5,7 &5& & &  & 7&7\\ \hline
   Modified Frobenius norm (MFN) & & & & 5,6,7 & 5,7& 6&6 & 5 & & \\ \hline
    UBPOP & & & &  & 5,6 & & 6&  &  & \\ \hline
  \end{tabular}
\end{center}
\end {table*}

In this section, the performance of the outage optimized and the joint optimized MLPCM schemes with STBCs and SBCs are compared with MLPCM designed using the rank and determinant criteria, and the mutual information. STBCs are designed based on the rank and determinant criteria, referred to as the RD Method, the mutual information, referred to as the MI Method, the outage probability, referred to as the OP Method, and the joint design of FEC codes and STBCs referred to as the JD Method. The parameters of the optimized codes are mentioned in Sections~\ref{sec:STBCoutage} and \ref{sec:JointOptimization} for the OP and JD  Methods, respectively. For designing MLPCM, the construction method in Section~\ref{codedModulationconstruction} is employed. We also compare MLPCM with bit-interleaved turbo coded-modulation (BITCM). For BITCM, the BCJR decoder is used with 20 iterations of the turbo decoder. The turbo code in BITCM is LTE's turbo code \cite{ETSITS136212}. The optimized parameters are mentioned in Section~\ref{sec:STBCoutage}. The polar decoder in all cases is SCD. The SC list decoder, introduced in \cite{Tal2015}, is also tried but no curve has been shown. For all SCD curves, SCLD slightly improves the performance, e.g., around 0.2 dB. Unless otherwise stated, we use a $2 \times 2$ antenna configuration. The constellation used for Matrices A, B, E, and F is QPSK and for Matrix C and D is 16-QAM. A summary of compared signal design methods and the corresponding figure numbers are mentioned in Table.~\ref{tbl1}.

The performance of the BITCM and MLPCM with Matrices A, B, and the Alamouti STBC for 2 bpcu with $R_{\rm{tot}}=1/2$ and $N_{\rm{tot}}=512$ bits are compared in Fig. \ref{fig:result1}. MLPCM with Alamouti STBC scheme is constructed using an MLC with 4 levels for a 16-QAM constellation with the total code rate of $1/2$. The BITCM with $\textbf{G}_{\rm{A}} $ is 0.8 dB worse than the MLPCM with $\textbf{G}_{\rm{A}} $ at a FER of $0.01$. Note that the BCJR-based turbo decoder with 20 iterations is far more complex than the SCD. In the set of curves provided for MLPCM with $\textbf{G}_{\rm{B}} $, the STBC designed using the MI Method is 1.2 dB worse than the RD Method since the mutual information is not an appropriate measure for the slow fading channel.  Furthermore, FERs of STBCs designed using the OP and ID Methods are 0.1 dB and 0.4 dB lower than the one with the RD Method, respectively. Thus, as expected, the joint optimization can improve the performance even more than the optimization based on the outage probability.

\begin{figure}[!b]
\centering   
 \includegraphics[width=0.47\textwidth]{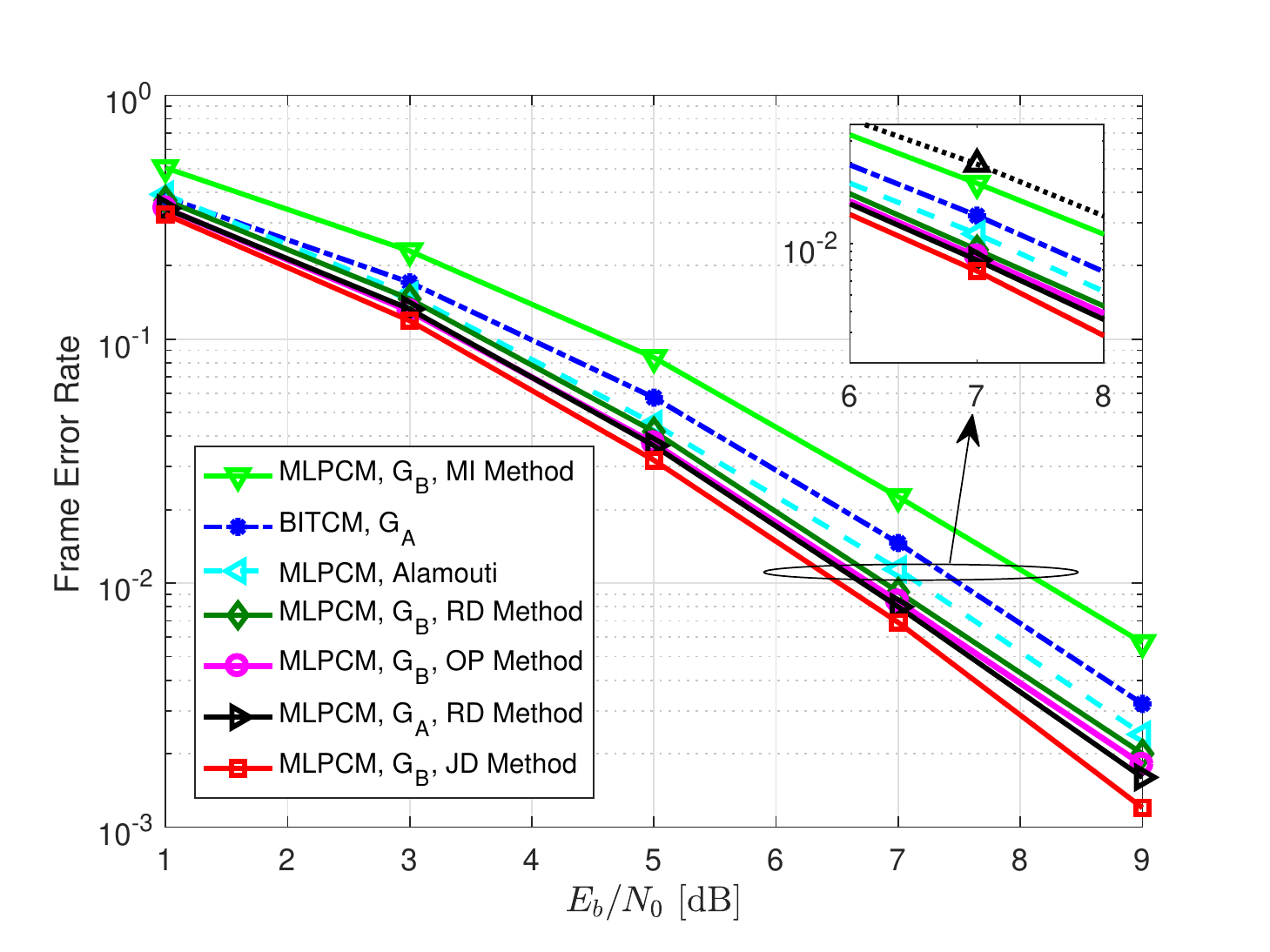}
\caption{FER comparison of BITCM and MLPCM with $\textbf{G}_{\rm{A}} $, and MLPCM with $\textbf{G}_{\rm{B}} $ designed using the rank and determinant criteria (RD), the mutual information (MU), the outage probability (OP), and the joint design (JD) Methods, and the Alamouti code for 2 bpcu and $N_{\rm{tot}}=512$.}
\label{fig:result1}
\end{figure}

In Fig. \ref{fig:result1} for all curves, we used the SPM designed based on the determinant criterion since other measures do not generate better SPMs or the improvement is negligible. In Fig. \ref{fig:result2}, we evaluate the effect of labelling rules including the Frobenius norm, referred to as the FN Rule, the Frobenius norm using the modified labeling algorithm presented at the end of Section \ref{labelling}, referred to as the MFN Rule, and the design based on the UBPOP, referred to as the UBPOP Rule, for 4 bpcu. A summary of the compared SPM design rules and the corresponding figure numbers are mentioned in Table.~\ref{tbl2}.   For all curves, $N_{\rm{tot}}=1024$ bits and $R_{\rm{tot}}=1/2$. We observe that for the MLPCM with $\textbf{G}_{\rm{C}} $, the SPM designed using the MFN Rule, improves the FER by about 2 dB in comparison to the FN Rule at a FER of $0.01$. Moreover, for TV $\textbf{G}_{\rm{C}} $, SPMs designed using the MFN and UBPOP Rules work 1.2 and 1.7 dB better than the SPM generated using the FN Rule, respectively.

\begin{figure}[!h]
\centering   
 \includegraphics[width=0.47\textwidth]{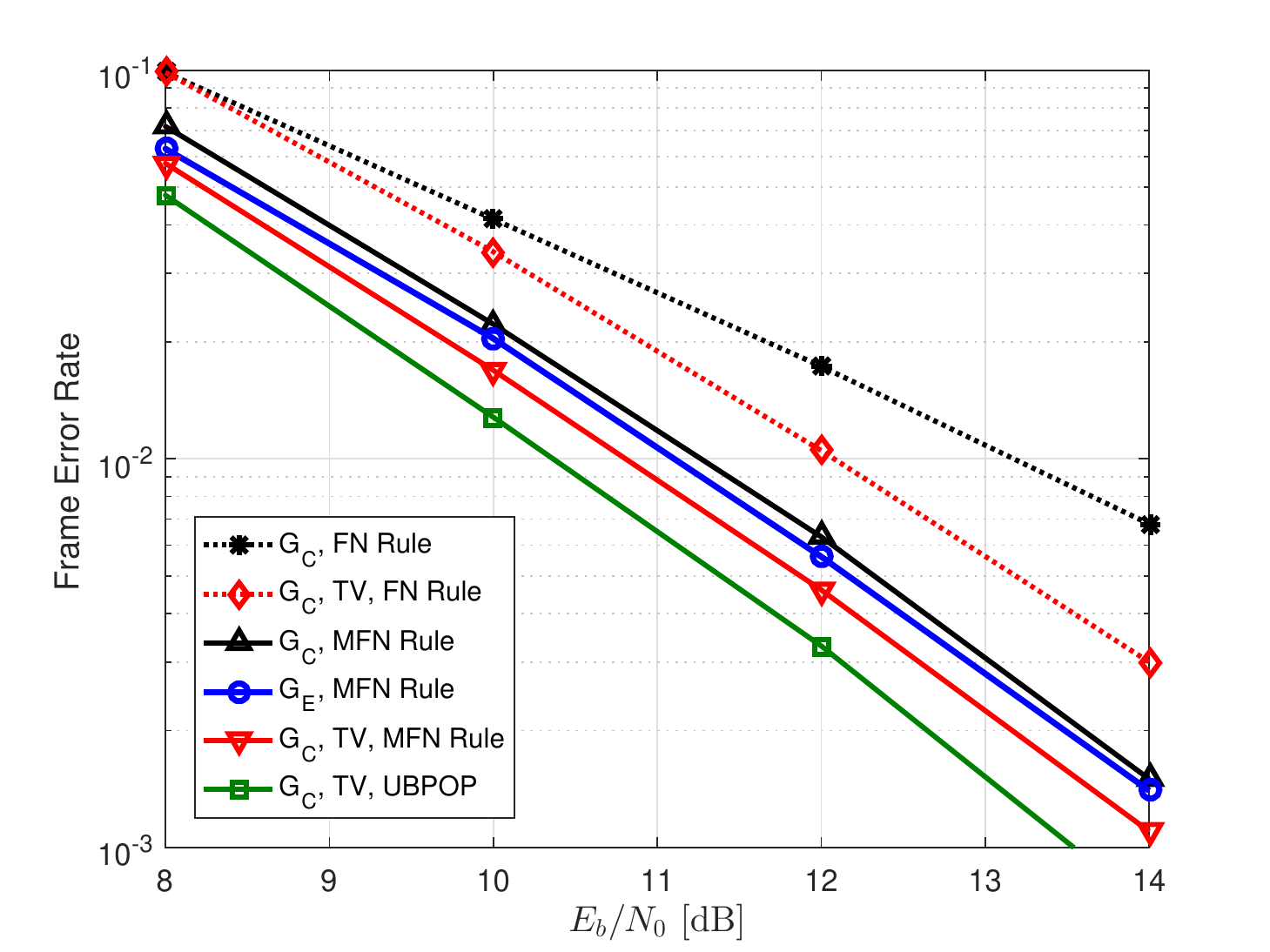}
\caption{FER comparison of MLPCM with $\textbf{G}_{\rm{C}} $ designed using the Frobenius norm (FN), modified Frobenius norm (MFN), and UBPOP rules for SPM generation and $\textbf{G}_{\rm{E}} $ designed using the OP Method, all for 4 bpcu and $N_{\rm{tot}}=1024$.}
\label{fig:result2}
\end{figure}

In Fig.~\ref{fig:result3}, we provided the comparison of  $\textbf{G}_{\rm{D}} $ and $\textbf{G}_{\rm{C}} $ at 7.2 bpcu for the total code length of $128$ bits and $R_{\rm{tot}}=9/10$.  The results indicate that at a FER of $0.001$, the MLPCM with TV $\textbf{G}_{\rm{D}} $ designed using the OP Method outperforms MLPCM with $\textbf{G}_{\rm{C}} $ and TV $\textbf{G}_{\rm{C}} $ by 0.8 dB and 0.6 dB, respectively. Furthermore, for MLPCM with TV $\textbf{G}_{\rm{D}} $, optimization using the JD Method provides 0.2 dB improvement over the OP Method.  It is clear that the joint optimization of polar codes and STBCs for all code rates can slightly improve the performance in comparison to the optimization based on the outage probability. Also, by comparing  Fig.~\ref{fig:result3} with Fig.~\ref{fig:b}, we realize that the order of curves in terms of the FER is the same as that of the outage probability curves.

\begin{figure}[!h]
\centering    
 \includegraphics[width=0.47\textwidth]{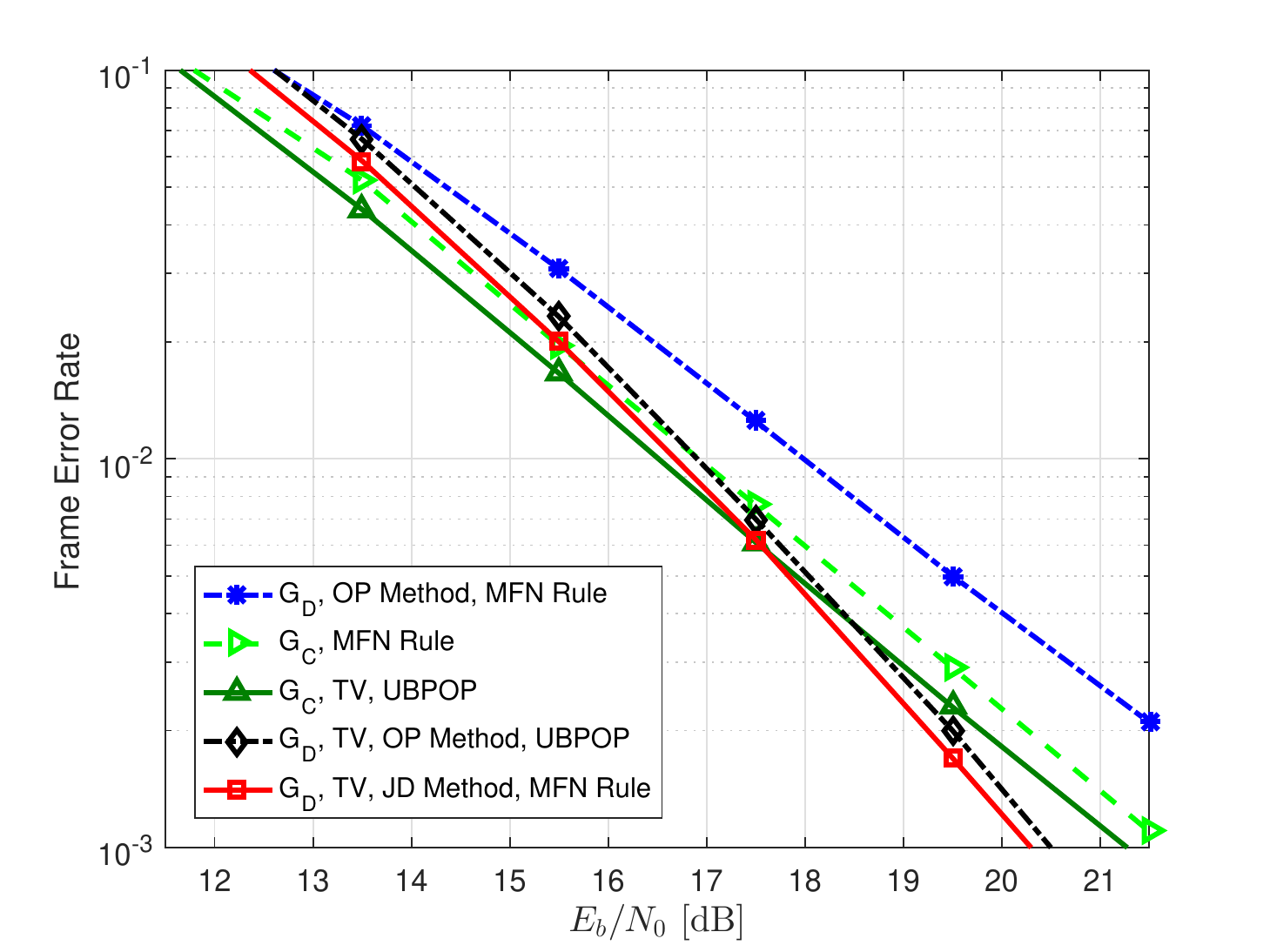}
\caption{a) FER comparison of MLPCM with $\textbf{G}_{\rm{C}} $ and $\textbf{G}_{\rm{D}} $ designed using the OP and JD Methods, all for 7.2 bpcu and $N_{\rm{tot}}=128$.}
\label{fig:result3}
\end{figure}

Finally, the comparison of MLPCM with $\textbf{G}_{\rm{C}} $ and $\textbf{G}_{\rm{F}}$ for 6 bpcu by setting $N_{\rm{tot}}=256$, $R_{\rm{tot}}=1/2$ and $N_{\rm{t}}=N_{\rm{r}}=3$ is shown in Fig. \ref{fig:result4}. We observe that for MLPCM with $\textbf{G}_{\rm{C}} $, employing the MFN Rule improves the performance by 0.3 dB in comparison to the FN Rule at a FER of $0.01$. This improvement is less than what we have observed for $N_{\rm{r}}=2$ antennas in Fig.~\ref{fig:result2} since as $N_{\rm{r}}$ increases, the diversity order increases and eventually the Frobenius norm, as the measure for designing SPM, becomes optimal. In addition, MLPCM with $\textbf{G}_{\rm{F}}$  and TV $\textbf{G}_{\rm{F}}$ optimized using the JD Method work 0.5 and 0.6 dB better than MLPCM with $\textbf{G}_{\rm{C}} $  and TV $\textbf{G}_{\rm{C}} $, respectively.

\begin{figure}[!h]
\centering    
 \includegraphics[width=0.47\textwidth]{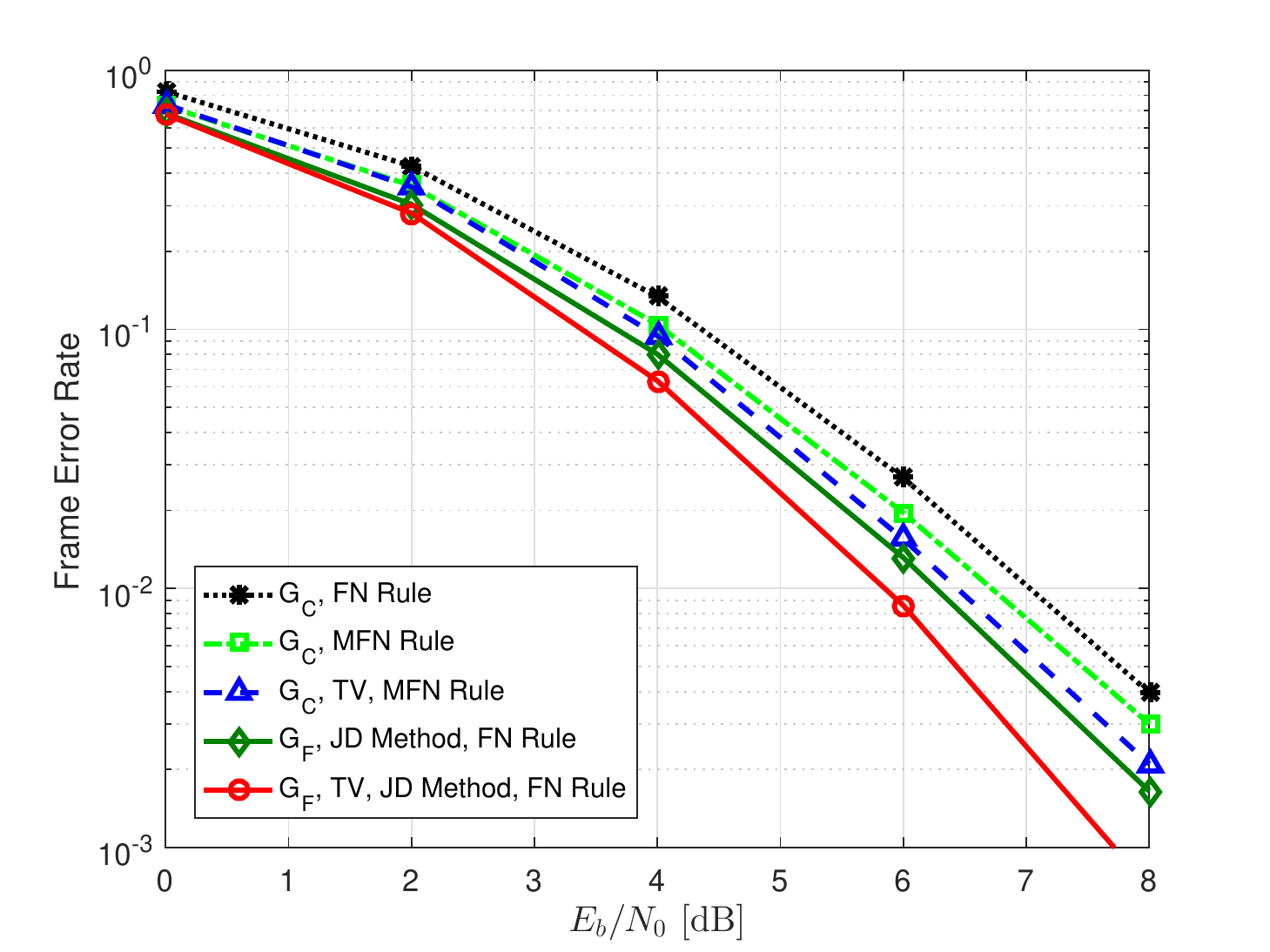}
\caption{FER comparison of MLPCM with $\textbf{G}_{\rm{C}} $ and $\textbf{G}_{\rm{F}} $ designed using the JD Method, all for 6 bpcu and $N_{\rm{tot}}=256$ in a $3\times 3$ antenna configuration.}
\label{fig:result4}
\end{figure}

\section{Conclusion}
\label{sec:Conclusion}

In this paper, we improved the space-time signal design for multilevel polar coding as a low complexity power-efficient scheme for slow broadcast channels. To do so, we first optimized STBCs for MLPCM by minimizing the outage probability at a target outage or SNR. The method includes limiting the number of free parameter of STBCs by using power and shaping conditions and employing a modified PSO algorithm to find other parameters. In addition, we showed that the outage probability of all levels of MLC/MSD should be optimally equal under certain conditions and based on that we proposed an outage rule for determining component code rates. Furthermore, to design SPM we derived an upper bound on the pairwise outage probability that can model both slow- and fast-changing parameters of the system and the channel. We showed employing this bound to generate SPM for TVSBCs can substantially improve the performance up to 1.7 dB compared to the Frobenius norm. Due to the similarity of SPMs generated using the derived bound to those generated using the Frobenius norm for SBCs, we further proposed an algorithm to modify pairwise measures that decreased the FER up to 2 dB.

 We also proposed a novel approach to jointly optimize multilevel polar codes and STBCs. For the joint optimization of polar code and STBCs, here, we first change the parameters of a STBC to create a new STBC; then we generate a new SPM using a set-partitioning algorithm and repeat the code design procedure for the new STBC until we find an information set and the corresponding parameters of the STBC that minimize the FER jointly. The modified PSO can be employed for the joint optimization as well. The numerical results show an improvement compared to STBCs, SBCs and TVSBCs designed using the rank and determinant criteria. For future work, algorithms could be extended to a variety of alternative channel models, including correlated and multi-path fading channels.
    
\appendices
\section{The proof of the bound (\ref{ch2:outagecutoffclosed})}
\label{sec:appendixoutage}

We begin with simplifying the Bhattacharyya coefficient for the space-time signal model in (\ref{recievedsignal}). By substituting (\ref{prob}) and using \cite[Lemma.~1]{Dogandzic2003}, (\ref{pairwise1}) can be simplified as 
\begin{equation}
\label{cutoff124}
\hat{\rho}_{i,j}=\text{exp}\Big(\frac{-||(\textbf{S}_i-\textbf{S}_j)\textbf{H}||_{\rm{F}}^2}{4N_0}   \Big).
\end{equation}

Using (\ref{pairwise1}), the upper bound on the outage probability denoted as $p(q_{i,j} <\hat{\rho}_{i,j})$  can be written as 
\begin{equation}
\label{cutoff125}
p\Bigg( q_{i,j}<\text{exp}\Big(\frac{-||\Delta_{i,j}\textbf{H}||_{\rm{F}}^2}{4N_0}\Big)\Bigg).
\end{equation}
Taking $\ln$ of $q_{i,j}$ and $\hat{\rho}_{i,j}$, (\ref{cutoff125}) can be simplified as 
\begin{equation}
\label{cutoff126}
p\Big(||\Delta_{i,j}\textbf{H}]||_{\rm{F}}^2<-4N_0\ln(q_{i,j})\Big).
\end{equation}
In (\ref{cutoff126}), the left hand side of the inequality can be written as
\begin{equation}
\label{cutoff127}
||\Delta_{i,j}\textbf{H}||_{\rm{F}}^2=\sum_{n_{\rm{r}}}\sum_l\Big|\sum_{n_{\rm{t}}}\delta_{l,n_{\rm{t}}}h_{n_{\rm{t}},n_{\rm{r}}}\Big|^2.
\end{equation}
 The term $\Lambda$ inside of $\xi_{l,n_{\rm{r}}}=|\sum_{n_{\rm{t}}}\delta_{l,n_{\rm{t}}}h_{n_{\rm{t}},n_{\rm{r}}}|^2=|\Lambda|^2$ in (\ref{cutoff127}) can be algebraically extended to the real and imaginary parts. Each real and imaginary part is the weighted sum of terms including real or imaginary part of an element of $\textbf{H}$ denoted as $h_{n_{\rm{t}},n_{\rm{r}}}$. Therefore, both real and imaginary parts are the sum of Gaussian random variables and consequently have Gaussian distribution. Since the real and imaginary parts of $\xi_{l,n_{\rm{r}}}$ are Gaussian random variables and their correlation turns out to be zero, their sum is distributed according to a central chi-squared distribution with two degrees of freedom. In case of SBCs, (\ref{cutoff127}) is simplified to $\sum_{n_{\rm{r}}}|\sum_{n_{\rm{t}}}\delta_{1,n_{\rm{t}}}h_{n_{\rm{t}},n_{\rm{r}}}|^2$ in which for different receive antennas, the term  $\xi_{1,n_{\rm{r}}}$ is independent of others. Therefore, the distribution of sum of $\xi_{1,n_{\rm{r}}}$ is the sum of chi-squared distributions with two degrees of freedom which in turn is a chi-squared distribution with $2N_{\rm{r}}$ degrees of freedom and $\mu_2=\sum_{n_{\rm{r}}}\mu_{2,n_{\rm{r}}}$ where $\mu_{2,n_{\rm{r}}}$ is the variance for each receive antenna. The variance of the each real and imaginary part of $\Lambda$ in $\xi_{1,n_{\rm{r}}}$ is $\mu_2=0.5\sum_{n_{\rm{t}}}|\delta_{1,n_{\rm{t}}}|^2$. From (\ref{cutoff126}), the outage is the CDF of this distribution expressed in (\ref{ch2:outagecutoffclosed}).

\section{Derivation of the Approximation (\ref{ch2:outagecutoffclosedSTBC})}
\label{sec:appendixoutage2}

In this section, we use the definitions in Appendix~\ref{sec:appendixoutage}. In case of STBCs, for each receive antenna, (\ref{cutoff127}) can be simplified as $\sum_{l}\xi_{l,n_{\rm{r}}}$ which includes $L$ correlated terms. Since $\xi_{l,n_{\rm{r}}}$ has a gamma distribution, for $L=2$, $\sum_{l}\xi_{l,n_{\rm{r}}}$ is distributed according to a type I McKay distribution \cite{Holm2004}. For $L>2$, the general expression $\sum_{l}\xi_{l,n_{\rm{r}}}$ is the sum of  correlated gamma random variables and is distributed according to a complex infinite power series \cite{Alouini2001}. As a computationally cheap yet accurate approximation, the sum of  correlated gamma random variables can be accurately approximated as a gamma random variable by matching the first two moments \cite{Feng2016}. In a few straightforward steps, these moments can be derived  as (\ref{moments}) and the CDF is expressed in (\ref{ch2:outagecutoffclosedSTBC}).

\section{Derivation of the Approximation (\ref{ch2:outagecutoffclosedTVSBC})}
\label{sec:appendixoutageSBCTV}
 
The Bhattacharyya coefficient in (\ref{cutoff124})  for one receive antenna by considering TV sequences can be expressed as 
\begin{equation}
\label{cutoff128}
\rho(\textbf{S}_i,\textbf{S}_j|\textbf{H},\boldsymbol{\theta})={ \rm{exp}}\Big(- {\frac{1}{4 {N}_0}} \Big|\sum_{n_{\rm{t}}}\delta_{1,n_{\rm{t}}}h_{n_{\rm{t}},n_{\rm{r}}}e^{\jmath\theta_{n_{\rm{t}}}}\Big|^2\Big).
\end{equation}
Due to the fast-changing nature of TV sequences and assuming the code length tends to infinity, the integral over $\rho(\textbf{S}_i,\textbf{S}_j|\textbf{H},\boldsymbol{\theta})$ can be taken as
\begin{equation}
\label{cutoff200}
\begin{split}
&\boldsymbol{\int_{\theta}} \text{exp}\Big(-  \frac{1}{4N_0} \Big|\sum_{n_{\rm{t}}}\delta_{t,n_{\rm{t}}}h_{n_{\rm{t}},n_{\rm{r}}}e^{j\theta_{n_{\rm{t}}}}\Big|^2\Big) \frac{d\boldsymbol{\theta}}{(2\pi)^{N_{\rm{t}}}} \\
&= \text{exp}\big(- \frac{1}{4N_0}\sum_{n_{\rm{t}}}|\delta_{1,n_{\rm{t}}}h_{n_{\rm{t}},n_{\rm{r}}}|^2  \big) \boldsymbol{\int_{\theta}}\frac{\text{exp}(- F_{n_{\rm{r}}}(\Delta_{i,j}))}{(2\pi)^{N_{\rm{t}}}} d\boldsymbol{\theta},
\end{split}
\end{equation}
where
\begin{equation}
\label{cutoff201}
F_{n_{\rm{r}}}(\Delta_{i,j})=\frac{2}{4N_0}{\text{Re}}\{\sum_{u=1}^{N_{\rm{t}}}\sum_{v=1}^{N_{\rm{t}}}\delta_{1,u}\delta_{1,v}h_{u,n_{\rm{r}}}h_{v,n_{\rm{r}}}e^{-\jmath\hat{\theta}}\},
\end{equation}
in which $\hat{\theta}=\theta_u-\theta_v$ and $(2\pi)^{-N_{\rm{t}}}$ is the density function of $\boldsymbol{\theta}$ which is uniformly distributed in range $[0,2\pi]$. Taking the general form of the integral $\boldsymbol{\int_{\theta}}\text{exp}(- \frac{1}{4N_0}F_{n_{\rm{r}}}(\Delta_{i,j}))$ in (\ref{cutoff200}) is difficult. However, for $N_{\rm{t}}=2$, it can be simplified as
\begin{equation}
\label{cutoff202}
I_0( \frac{1}{2N_0}|\delta_{1,1}\delta_{1,2}|),
\end{equation}
where $I_0$ is the modified Bessel function of the first kind with zero order. To estimate the upper bound on the outage probability, we take the logarithm of the  Bhattacharyya coefficient in  (\ref{cutoff126}). Similarly, for $N_{\rm{t}}=2$ and $N_{\rm{r}}=1$ by taking the logarithm, the left hand side of (\ref{cutoff126}) is simplified as 
\begin{equation}
\label{cutoff203}
\frac{1}{4N_0}\underbrace{\sum_{n_{\rm{t}}}|\delta_{1,n_{\rm{t}}}h_{n_{\rm{t}},n_{\rm{r}}}|^2}_{\textrm{Part 1}}- \underbrace{\ln\big(I_0(\frac{1}{2N_0}|\prod_{n_{\rm{t}}}\delta_{1,n_{\rm{t}}}h_{n_{\rm{t}},n_{\rm{r}}}|)}_{\textrm{Part 2}}\big).
\end{equation}

In (\ref{cutoff203}), $\ln\big(I_0(x)\big)$ can be estimated using piecewise linear approximation given as
\begin{equation}
\label{approximatelnI}
\ln\big(I_0(x)\big)\approx  a_1x+a_2= \small
\begin{cases} 
0.12x&  0<x \leq 0.5, \\  
0.35x-0.12&  0.5<x\leq 1, \\ 
0.59x-0.37& 1<x \leq 2, \\
0.8x-0.81& 2<x \leq 4, \\
0.92-1.3&  x \geq 4. \\
\end{cases}
\end{equation}
Note that for $x>4$, the linear approximation remains valid for a wide range and thus, for designing a SPM, $\delta_{1,n_{\rm{t}}}$ or the signal energy can be adjusted to lie in this region.

Finding an exact distribution for (\ref{cutoff203}), as discussed in \cite{khoshnevis2018}, is difficult. Alternatively, to find a close distribution for a given pair of STBC points, the empirical distribution in (\ref{cutoff203}) is generated and compared with a large number of classes of distributions and the one with the lowest Kullback-Leibler divergence is chosen. To find a distribution that works well on average for all points of the STBC, the average Kullback-Leibler divergence over all STBC points is used. Thus, the measure to choose the distribution is given as
\begin{equation}
\label{approximatedoubleRayleigh}
\frac{1}{2^{2B}}\sum_u \sum_v  \sum_{i} \textrm{P}_i(\textbf{S}_u,\textbf{S}_v) \log_2\bigg(\frac{\textrm{P}_i(\textbf{S}_u,\textbf{S}_v)}{\textrm{Q}_i(\textbf{S}_u,\textbf{S}_v)}\bigg),
\end{equation}
where  $\textrm{P}_i(\textbf{S}_u,\textbf{S}_v)$ and $\textrm{Q}_i(\textbf{S}_u,\textbf{S}_v)$ are the $i^{\rm{th}}$ elements of the vector of probability values of the tested and empirical distributions, respectively. After testing a large number of classes of one and two parameters distributions, the log-normal distribution is chosen to model the distribution in (\ref{cutoff203}). To find the parameters of the under-test distribution, we match the first two moments. Note that although three parameter distributions may better model the empirical data, matching their first two moments may be difficult.  For $N_{\rm{r}}>1$,  Part 1 in (\ref{cutoff203}) changes to $\frac{1}{4N_0}\sum_{n_{\rm{r}}}\sum_{n_{\rm{t}}}|\delta_{1,n_{\rm{t}}}h_{n_{\rm{t}},n_{\rm{r}}}|^2$ and Part 2 should be extended as $- \ln(I_0(\frac{1}{2N_0}|\sum_{n_{\rm{r}}} \delta_{1,1}h_{1,n_{\rm{r}}}\delta_{1,2}^*h_{2,n_{\rm{r}}}^*|) )$. The matched CDF and the first two moments are presented in (\ref{ch2:outagecutoffclosedTVSBC}) and (\ref{firsttwomomnets}), respectively.  The empirical and the matched log-normal CDFs are depicted in Fig.~\ref{fig:CDFcompare} for two SNRs.

\begin{figure}[h!]
\center
 \includegraphics[width=0.47\textwidth]{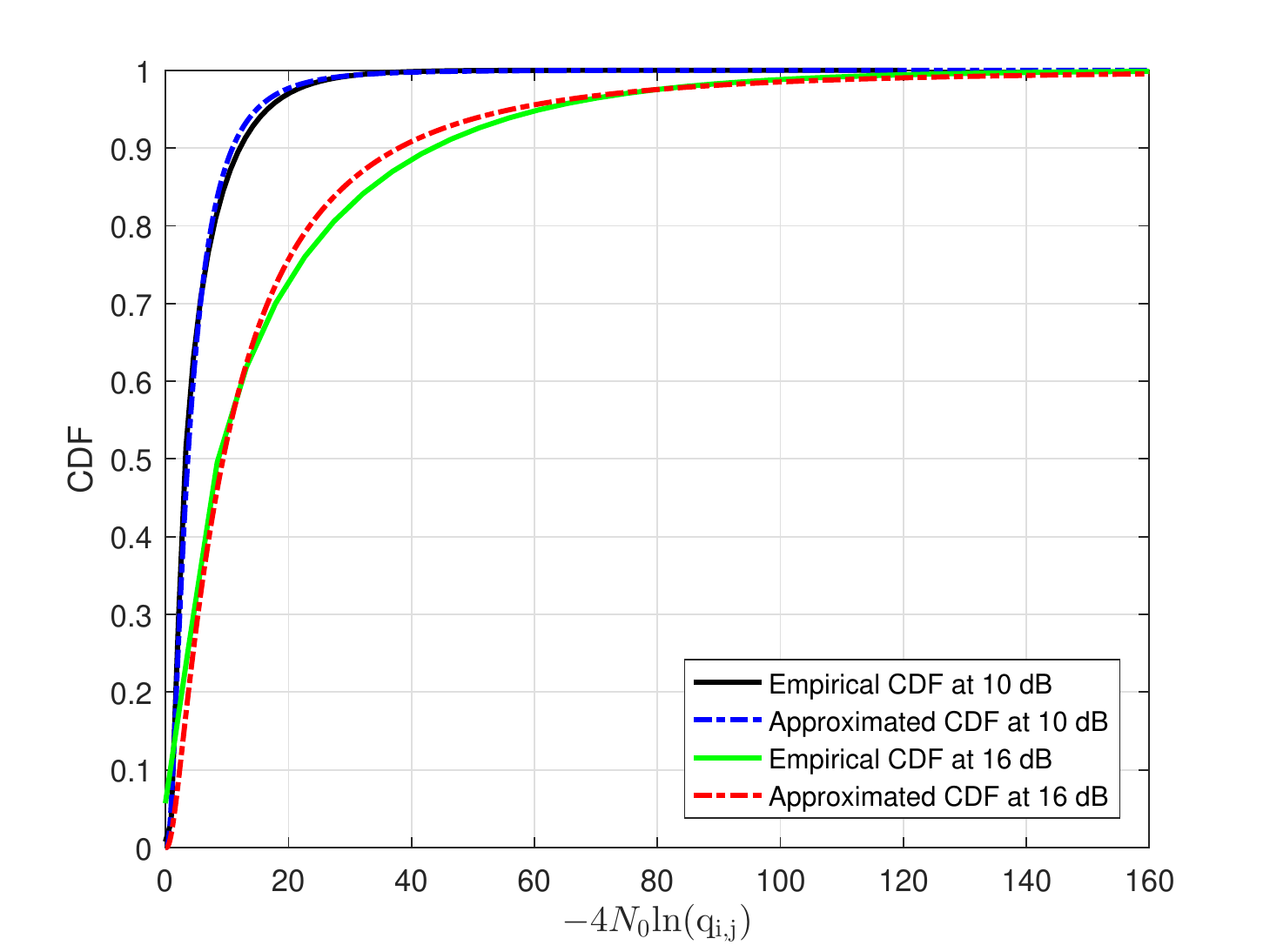}
\caption{Comparison of the empirical CDF and the approximated CDF for a given pair of TVSBC points at two different SNRs and $N_{\rm{r}}=1$.}
\label{fig:CDFcompare}
\end{figure}

\ifCLASSOPTIONcaptionsoff
  \newpage
\fi

\end{document}